\documentclass[12pt]{article}
\usepackage{amsthm}
\newtheorem{definition}{Definition}
\newtheorem{proposition}{Proposition}
\newtheorem{theorem}{Theorem}
\newtheorem{lemma}{Lemma}
\newtheorem{assumption}{Assumption}
\newtheorem{corollary}{Corollary}
\newtheorem{remark}{Remark}
\newtheorem{assumptionA}{Assumption A}
\usepackage{sourceserifpro}
\usepackage{microtype}
\usepackage{ragged2e}
\usepackage{amsmath}
\usepackage{parskip}
\usepackage{subcaption}
\usepackage{amsfonts}
\usepackage{rotating}
\usepackage{threeparttable}
\usepackage{array}
\usepackage{graphicx}
\usepackage{caption}
\usepackage{multirow}
\usepackage{booktabs}
\usepackage{amssymb}
\usepackage{setspace}
\usepackage{natbib}
\usepackage{hyperref}
\hypersetup{
  colorlinks=true,
  linkcolor=red,
  citecolor=red,
  urlcolor=red,
  pdfborder={0 0 0}
}
\usepackage[export]{adjustbox}
\usepackage{wrapfig}
\usepackage{csquotes}
\bibpunct{(}{)}{;}{a}{,}{,}

\newenvironment{keywords}{\par\vspace{0.5em}\noindent\textbf{Keywords:} }{\par}
\newenvironment{jel}{\par\vspace{0.25em}\noindent\textbf{JEL Classification:} }{\par}

\newcommand{\Imed}{\mathcal{I}}
\newcommand{\Ired}{\mathcal{I}^{(-X)}}
\newcommand{\Menu}{\mathcal{M}}

\newcommand{\Eb}{\mathbb{E}}
\newcommand{\Pb}{\mathbb{P}}
\newcommand{\Rb}{\mathbb{R}}

\title{\textbf{Distributional Granger Causality: Identification, Sequential Inference, and Adaptive Testing}}

\author{Ayush Jha\footnote{Corresponding author: Ayush Jha; \tt{ayush.jha@ttu.edu}} \\
\small Department of Economics, Texas Tech University, Lubbock, TX, USA \
}
\date{}

\begin{document}

\maketitle

\begin{abstract}
Predictive dependence in time series need not be confined to the conditional mean. Outside the Gaussian setting, causal content may arise through conditional scale, tail behavior, asymmetry, or other distributional features, implying that no single Granger-type test provides a complete characterization of predictive dependence. This paper develops a framework for distributional Granger causality based on a finite collection of channel-specific restrictions. Under suitable determinacy conditions, the channel menu is shown to be complete, yielding an identification result that links distributional Granger non-causality to a finite set of testable hypotheses. Building on this representation, we develop an adaptive sequential testing procedure that allocates inferential resources across channels while maintaining familywise error control through an alpha-investing mechanism. A policy-invariant validity theorem establishes finite-sample size control under arbitrary admissible selection rules, while an asymptotic efficiency theorem shows that a confidence-bound allocation rule achieves power equivalent to that of an infeasible oracle benchmark. The theoretical guarantees are derived from primitive mixing and moment conditions together with a circular-block permutation scheme.
\end{abstract}

\begin{keywords}
Distributional Granger causality; Sequential testing; Multiple testing; Adaptive inference.
\end{keywords}

\begin{jel}
C12; C14; C22; C52.
\end{jel}

\onehalfspacing

\section{Introduction}\label{sec:introduction}

Granger causality is among the most widely used tools for studying predictive relationships in time series. In its classical form, the concept is operationalized through incremental forecasting ability: a process $X_t$ is said to Granger-cause a process $Y_t$ if past values of $X_t$ improve forecasts of $Y_t$ beyond the information contained in the past of $Y_t$ alone \citep{Granger1969,Sims1972}. Under linear dynamics and Gaussian innovations, this notion admits a particularly simple characterization through lag-exclusion restrictions in vector autoregressions \citep{Geweke1982,Geweke1984}. In such environments, the conditional distribution is fully summarized by its first two moments, and causal inference reduces to a problem of conditional-mean predictability.

Outside the Gaussian setting, however, predictive dependence need not be confined to the conditional mean. A predictor may influence the conditional variance of future outcomes, alter tail probabilities, modify higher-order moments, or affect other features of the conditional distribution without changing expected values. This observation has motivated a large literature on alternative notions of Granger causality, including causality in variance \citep{CheungNg1996}, causality in risk and tail events \citep{HongLiuWang2009,WhiteKimManganelli2015}, causality in quantiles \citep{JeongHardleSong2012,SongTaamouti2021}, nonlinear and nonparametric causality \citep{HiemstraJones1994,DiksPanchenko2006,NishiyamaHitomiKawasakiJeong2011}, copula-based dependence measures \citep{BouezmarniRomboutsTaamouti2012}, and information-theoretic approaches such as transfer entropy \citep{BarnettSethBossomaier2009}.

The resulting literature provides a rich collection of channel-specific tests but leaves open a fundamental question. If predictive dependence may arise through multiple dimensions of the conditional distribution, how should inference be conducted when the relevant channel is unknown ex ante? In empirical practice, researchers often evaluate several causality tests and interpret the resulting collection of $p$-values informally. This approach faces two difficulties. First, simultaneous consideration of multiple channels generates a multiple-testing problem, potentially leading to substantial distortions in familywise error rates. Second, there is generally no principled rule for allocating finite inferential resources across competing tests or for determining which channel should receive the greatest attention.

This paper develops a unified framework for distributional Granger causality that addresses both issues. The starting point is an identification result. Rather than viewing existing causality tests as competing methodologies, the paper interprets them as measuring distinct coordinates of a common object: the conditional distribution of future outcomes. Distributional Granger non-causality is characterized through a finite collection of channel-specific restrictions corresponding to conditional location, scale, tail behavior, and higher-order distributional features. Under suitable determinacy conditions, these restrictions are shown to be complete in the sense that distributional Granger non-causality holds if and only if every channel-specific restriction is satisfied. This representation transforms an infinite-dimensional hypothesis concerning conditional distributions into a finite collection of testable components while preserving identification of the underlying causal object.

Building on this representation, the paper develops an adaptive sequential testing procedure for channel selection. The procedure allocates a finite testing budget across channels, updates allocation decisions using previously observed outcomes, and terminates once sufficient evidence against the null has been accumulated or the testing budget has been exhausted. The inferential framework combines channel-specific hypothesis tests with an alpha-investing mechanism drawn from the sequential multiple-testing literature \citep{FosterStine2008,AharoniRosset2014}. This construction permits data-dependent channel selection while maintaining rigorous control of familywise error rates.

The theoretical contribution consists of three results. First, a completeness theorem establishes that the proposed channel menu fully characterizes distributional Granger non-causality. Second, a policy-invariant familywise error theorem shows that inferential validity is preserved under arbitrary admissible channel-selection rules. The result separates validity from allocation: any adaptive policy satisfying the filtration-adaptedness requirements inherits the same finite-sample error guarantee. Third, an asymptotic efficiency theorem demonstrates that a confidence-bound allocation rule achieves power equivalent to that of an infeasible oracle benchmark in the limit. Consequently, the power loss associated with uncertainty regarding the active channel vanishes asymptotically.

An important feature of the analysis is that the theoretical guarantees are derived from primitive conditions on the data-generating process rather than imposed as abstract assumptions. Conditional super-uniformity of the channel-specific $p$-values follows from a circular-block permutation procedure under strict stationarity and suitable mixing conditions. Likewise, the concentration properties required for the efficiency analysis are obtained from Bernstein-type inequalities for dependent processes. These results provide an explicit link between the high-level assumptions underlying the sequential testing framework and standard regularity conditions for weakly dependent time series.

The framework has implications for several areas of applied econometrics. In financial applications, predictive dependence frequently appears through volatility, downside risk, or tail exposure rather than through expected returns. Similar considerations arise in macroeconomics, network analysis, and systemic-risk measurement, where distributional spillovers are often more informative than conditional-mean effects. By treating causality as a property of the entire conditional distribution rather than a single moment condition, the proposed framework accommodates such settings while preserving formal inferential guarantees.

The remainder of the paper is organized as follows. Section~\ref{sec:literature} reviews the related literature. Section~\ref{sec:framework} develops the distributional-causality framework and establishes the completeness of the channel representation. Section~\ref{sec:agent} introduces the adaptive sequential testing procedure and presents the validity and efficiency results. Section~\ref{sec:mc} reports Monte Carlo evidence on finite-sample performance. Section~\ref{sec:conclusion} concludes. Proofs and additional technical results are collected in the Online Appendix.

\section{Related Literature}\label{sec:literature}

This paper contributes to three strands of the econometrics literature: distributional approaches to Granger causality, sequential multiple-testing procedures with error-control guarantees, and adaptive allocation methods for statistical experimentation.

\subsection{Distributional Granger causality}

The classical formulation of Granger causality characterizes predictive dependence through incremental linear forecasting ability \citep{Granger1969,Sims1972,Geweke1982,Geweke1984}. Under Gaussianity and linear dynamics, this characterization is sufficient because the conditional distribution is fully summarized by its first two moments. Outside that setting, however, predictive content may arise through conditional scale, tail behavior, asymmetry, or other higher-order distributional features that are not captured by conditional-mean restrictions alone.

A substantial literature has therefore developed tests targeting specific dimensions of predictive dependence. Examples include causality in variance \citep{CheungNg1996}, causality in risk and tail events \citep{HongLiuWang2009,WhiteKimManganelli2015}, causality in distribution \citep{CandelonTokpavi2016}, causality in quantiles \citep{JeongHardleSong2012,SongTaamouti2021}, nonlinear and nonparametric causality \citep{HiemstraJones1994,DiksPanchenko2006,NishiyamaHitomiKawasakiJeong2011}, and copula-based approaches to conditional dependence \citep{BouezmarniRomboutsTaamouti2012}. Information-theoretic measures such as transfer entropy provide additional tools for detecting departures from linear predictability and coincide with classical Granger causality under Gaussianity \citep{BarnettSethBossomaier2009}.

Related developments arise in the literature on identification under non-Gaussianity, where higher-order moments and distributional features play a central role in recovering structural relationships \citep{ShimizuHoyerHyvarinenKerminen2006,LanneMeitzSaikkonen2017,GourierouxMonfortRenne2017,MontielOleaPlagborgMollerQian2022}. Higher-order cumulants likewise provide a natural representation of departures from Gaussianity and characterize important dimensions of distributional shape \citep{ArevalilloNavarro2026}. Recent work has also continued to refine inference for predictive regressions and Granger-causality tests in nonstandard environments, including settings involving boundary parameters and weak identification \citep{CavaliereGeorgiev2020,CavaliereGeorgievZanelli2025}.

The present paper differs from this literature in its objective. Rather than proposing a new channel-specific test, it studies how existing tests can be combined within a unified framework. The contribution is an identification result showing that, under suitable determinacy conditions, a sufficiently rich collection of channel-specific restrictions provides a complete characterization of distributional Granger non-causality.

\subsection{Sequential multiple testing and online inference}

Testing multiple dimensions of predictive dependence naturally raises a multiple-testing problem. Classical approaches control familywise error rates through fixed corrections such as Bonferroni procedures, but these methods are static and often conservative. A complementary literature studies sequential and online testing procedures that allocate significance levels adaptively as hypotheses are examined \citep{FosterStine2008,AharoniRosset2014,JavanmardMontanari2018,RamdasZrnicWainwrightJordan2018}.

Closely related developments arise in the literature on anytime-valid inference, test martingales, and e-processes, which provide inferential guarantees that remain valid under optional stopping and adaptive continuation rules \citep{RamdasGrunwaldVovkShafer2023,GrunwaldHeideKoolen2024}. These methods emphasize the construction of inferential procedures whose validity is preserved under data-dependent experimentation.

The present paper applies these ideas to distributional causality testing. The distinguishing feature of the setting considered here is that the sequence of hypotheses is not exogenously given but is selected adaptively from a menu of candidate channels. The resulting policy-invariant familywise error theorem establishes that valid inference is preserved under arbitrary admissible channel-selection rules. In this sense, the paper extends sequential testing methods to an environment in which both testing and selection are data dependent.

\subsection{Adaptive allocation and statistical efficiency}

The allocation of finite testing resources across competing hypotheses is closely related to the literature on sequential experimentation and adaptive allocation. Multi-armed bandit methods provide a canonical framework for balancing exploration and exploitation and have generated a large body of results concerning regret minimization and efficient learning \citep{AuerCesaBianchiFischer2002,LattimoreSzepesvari2020,KaufmannCappeGarivier2016}.

The problem considered here differs from the standard bandit setting in both objective and interpretation. The goal is not reward maximization per se, but efficient allocation of inferential effort across competing dimensions of predictive dependence. The objects being learned are channel-specific measures of statistical informativeness, represented through the non-centrality parameters of the underlying tests.

This perspective connects with a broader movement toward adaptive and data-driven procedures in econometrics \citep{ChernozhukovEtAl2018,FarrellLiangMisra2021}. Recent methodological discussions have emphasized the growing interaction between econometric theory, machine learning, and artificial intelligence \citep{GuggenbergerSuSun2026}. The contribution of the present paper is to embed adaptive allocation within a formal inferential framework. The resulting procedure combines three features that are typically studied separately: identification of the causal object, finite-sample error control, and asymptotically efficient allocation of testing resources. The asymptotic oracle-efficiency result demonstrates that adaptation can improve power without compromising the validity guarantees established by the sequential testing framework.

\section{Framework: Channel Decomposition and Menu Completeness}\label{sec:framework}

\subsection{Notation and information sets}

Let $\{(X_t,Y_t)\}_{t\in\mathbb{Z}}$ be a strictly stationary bivariate stochastic process defined on the probability space $(\Omega,\mathcal{F},\mathbb{P})$. Define the information sets
\[
\Imed_{t-1} := \sigma\big(\{X_{t-j},Y_{t-j}\}_{j\ge 1}\big),
\qquad
\Ired_{t-1} := \sigma\big(\{Y_{t-j}\}_{j\ge 1}\big).
\]
Let $F_{Y_t\mid\mathcal{J}}(\cdot)$ denote a regular conditional distribution of $Y_t$ given a sub-$\sigma$-field $\mathcal{J}$, and let $Q_{Y_t}(\tau\mid\mathcal{J})$ denote the associated conditional $\tau$-quantile. Unless otherwise stated, all equalities and inequalities involving conditional objects hold $\Pb$-almost surely, with continuity-point qualifications imposed where required.

The subsequent asymptotic analysis relies on a collection of primitive conditions governing dependence, moments, and conditional smoothness. These assumptions are standard in the nonparametric and resampling literature for weakly dependent stochastic processes and are sufficient to establish the higher-level regularity conditions employed in the finite-sample size and power results developed in Section~\ref{sec:agent}. Formal derivations are provided in Appendix~\ref{OA:primitives}.

\begin{assumption}[Data-generating primitives]\label{ass:primitives}
The process $\{(X_t,Y_t)\}_{t\in\mathbb{Z}}$ is strictly stationary and absolutely regular ($\beta$-mixing) with coefficients $\beta(m)\le c_0\,\rho^{m}$ for some $c_0<\infty$, $\rho\in(0,1)$ (geometric mixing; polynomial decay $\beta(m)=O(m^{-b})$ with $b>2$ suffices for all results at the stated rates). The variables admit $4+\epsilon$ finite moments, $\Eb\,|Y_t|^{4+\epsilon}+\Eb\,|X_t|^{4+\epsilon}<\infty$ for some $\epsilon>0$. The conditional distribution $F_{Y_t\mid\Imed_{t-1}}$ has a density that is bounded and, near each tail quantile $Q_{Y_t}(\tau_L\mid\cdot)$, $Q_{Y_t}(\tau_U\mid\cdot)$, bounded away from zero and Lipschitz in $y$ uniformly in the conditioning history.
\end{assumption}

\begin{definition}[Circular-block permutation scheme]\label{def:permscheme}
For a channel statistic $S_k$ computed from $\{(X_t,Y_t)\}_{t=1}^T$, define the permutation $p$-value
\[
P_k =
\frac{
1+\#\{b:S_k(X^{(b)},Y)\ge S_k(X,Y)\}
}{B+1},
\]
where each permuted sequence is generated by the circular shift
\[
X^{(b)}_t=X_{((t+s_b-1)\bmod T)+1},
\]
with $s_b$ drawn independently and uniformly from $\{1,\ldots,T-1\}$ for $b=1,\ldots,B$.

The circular shift preserves the marginal distribution and serial dependence structure of $\{X_t\}$. Under the channel null hypothesis $H_{0,k}$, the resulting statistic is invariant in distribution to such shifts, providing the basis for permutation validity; see Appendix~\ref{OA:primitives}.
\end{definition}

\subsection{Distributional Granger non-causality}

The object of interest is the conditional distribution of $Y_t$ given the available information. The strongest form of predictive irrelevance is defined through equality of conditional laws.

\begin{definition}[Granger non-causality in distribution]\label{def:gc_dist}
The process $X$ does not Granger-cause $Y$ in distribution if
\[
F_{Y_t\mid\Imed_{t-1}}(y) = F_{Y_t\mid\Ired_{t-1}}(y)
\quad
\forall y\in\Rb,\ \forall t.
\]
\end{definition}

Definition~\ref{def:gc_dist} characterizes predictive irrelevance at the level of the entire conditional distribution. It therefore excludes any incremental predictive contribution of the history of $X$ once the history of $Y$ has been conditioned upon. Since this condition is infinite-dimensional, direct implementation is generally infeasible. The framework developed below replaces the distributional null by a collection of lower-dimensional restrictions defined through measurable functionals of the conditional law.

\subsection{Channel functionals and channel nulls}

Let $\Menu=\{1,\ldots,K\}$ index a finite collection of measurable functionals of the conditional distribution. For each channel $k\in\Menu$, let $\phi_k$ denote a functional defined on the space of conditional laws. The corresponding channel null hypothesis is
\[
H_{0,k}:
\quad
\phi_k\!\left(F_{Y_t\mid\Imed_{t-1}}\right) = \phi_k\!\left(F_{Y_t\mid\Ired_{t-1}}\right)
\quad
\Pb\text{-a.s.},
\qquad \forall t.
\]

The canonical menu considered throughout the paper is given by
\begin{align}
\phi_1 &= \Eb[Y_t\mid\cdot]
&&\text{(conditional mean),}\notag\\
\phi_2 &= \mathrm{Var}(Y_t\mid\cdot)
&&\text{(conditional variance),}\notag\\
\phi_3 &= Q_{Y_t}(\tau_L\mid\cdot)
&&\text{(lower-tail quantile),}\notag\\
\phi_4 &= Q_{Y_t}(\tau_U\mid\cdot)
&&\text{(upper-tail quantile),}\notag\\
\phi_5 &= \big(\kappa_3,\kappa_4\big)(Y_t\mid\cdot)
&&\text{(higher-order conditional cumulants).}
\label{eq:menu}
\end{align}

The first four channels correspond to conditional location, scale, and tail behavior. The fifth channel captures higher-order distributional features through conditional skewness- and kurtosis-related cumulants. Each channel induces a testable restriction and may be associated with an established inferential procedure. Throughout, the testing methodology is treated as given; the emphasis is on the logical relationship between the collection of channel restrictions and distributional Granger non-causality.

\subsection{The nesting property}

The channel hypotheses are nested within the distributional null.

\begin{proposition}[Distributional non-causality implies every channel null]\label{prop:dist_implies_channels}
Suppose the functionals $\phi_k$ in \eqref{eq:menu} are well defined. If $X$ does not Granger-cause $Y$ in distribution according to Definition~\ref{def:gc_dist}, then $H_{0,k}$ holds for every $k\in\Menu$.
\end{proposition}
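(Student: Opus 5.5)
The plan is to show that each $\phi_k$ in \eqref{eq:menu} is a deterministic map applied to the conditional law. Equality of the two conditional laws, holding almost surely and for every $y$, then forces equality of the functionals. Formally, for each $t$ let $\mu_t(\omega,\cdot)$ and $\nu_t(\omega,\cdot)$ denote regular versions of the conditional laws of $Y_t$ given $\Imed_{t-1}$ and $\Ired_{t-1}$. Definition~\ref{def:gc_dist} says their distribution functions agree for all $y$ outside a null set $N_t$. I will fix this convention carefully first, since ``$\forall y$'' must hold off a single null set. If the definition only gives equality for each fixed $y$ almost surely, I take a countable intersection over rational $y$. Right-continuity of distribution functions then gives equality for all $y$ off one null set $N_t$. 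Measures on $\Rb$ are determined by their distribution functions, so $\mu_t(\omega,\cdot)=\nu_t(\omega,\cdot)$ as probability measures for $\omega\notin N_t$.

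Next I check, channel by channel, that each $\phi_k$ is given by integrals or inverse-CDF operations against the conditional law. By Assumption~\ref{ass:primitives} (the $4+\epsilon$ moments), $\Eb[|Y_t|^4\mid\mathcal J]<\infty$ almost surely for both $\sigma$-fields. Hence
\[
\Eb[Y_t^j\mid\mathcal J](\omega)=\int y^j\,\mu^{\mathcal J}_t(\omega,dy),\qquad j=1,\dots,4,
\]
almost surely, by the disintegration property of regular conditional distributions. With $\mu=\nu$ off $N_t$, the conditional raw moments of orders one through four coincide almost surely. The conditional mean, the conditional variance $\Eb[Y_t^2\mid\cdot]-(\Eb[Y_t\mid\cdot])^2$, and the cumulants $\kappa_3,\kappa_4$ are fixed polynomial functions of these raw moments. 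Therefore $H_{0,1}$, $H_{0,2}$ and $H_{0,5}$ hold. For the quantile channels, I define $Q_{Y_t}(\tau\mid\mathcal J)(\omega)=\inf\{y:F_{Y_t\mid\mathcal J}(\omega,y)\ge\tau\}$, which depends only on the distribution function. Equality of the distribution functions off $N_t$ then gives equality of the $\tau_L$ and $\tau_U$ quantiles, so $H_{0,3}$ and $H_{0,4}$ hold. This uses only the convention that quantiles are left-continuous inverses; the density conditions in Assumption~\ref{ass:primitives} are not needed, though they guarantee uniqueness.

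The main obstacle is technical rather than conceptual. The conditional expectations in $\phi_1,\phi_2,\phi_5$ are defined only up to null sets, while the definition compares regular conditional distributions. So I must justify that the versions obtained by integrating against the regular kernel are valid versions of the conditional moments. This is the standard disintegration theorem applied to $g(y)=y^j$, which is integrable under the moment assumption. A second point is that $\Imed_{t-1}$ and $\Ired_{t-1}$ differ, so the equality is between an $\Imed_{t-1}$-measurable and an $\Ired_{t-1}$-measurable random variable. This is harmless, because equality of the kernels holds pointwise in $\omega$ off $N_t$. Finally, I take the union over $t$ of the countably many null sets $N_t$, which completes the claim for all $t$ and all $k\in\Menu$.
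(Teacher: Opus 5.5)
Your proposal is correct and takes the same route as the paper. The paper's proof simply applies each $\phi_k$ to both sides of the almost-sure equality $F_{Y_t\mid\Imed_{t-1}}=F_{Y_t\mid\Ired_{t-1}}$; you additionally supply the measure-theoretic details it leaves implicit (a single null set via rational $y$ and right-continuity, identification of the conditional moments with integrals against the regular kernel, quantiles as left-continuous inverses, and a countable union over $t$), and your appeal to the moment condition in Assumption~\ref{ass:primitives} can be replaced by the proposition's own hypothesis that the $\phi_k$ are well defined.
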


\begin{proof}
Under Definition~\ref{def:gc_dist},
\[
F_{Y_t\mid\Imed_{t-1}} = F_{Y_t\mid\Ired_{t-1}}
\qquad
\Pb\text{-a.s.}
\]
for every $t$. Since each $\phi_k$ is a measurable functional of the conditional distribution, application of $\phi_k$ to both sides yields
\[
\phi_k\!\left(F_{Y_t\mid\Imed_{t-1}}\right) = \phi_k\!\left(F_{Y_t\mid\Ired_{t-1}}\right),
\]
which establishes $H_{0,k}$.
\end{proof}

Proposition~\ref{prop:dist_implies_channels} establishes that distributional Granger non-causality implies the validity of every channel restriction. Consequently, rejection of any channel null is sufficient to reject distributional non-causality. The converse implication, however, requires additional structure. Specifically, the collection of channel functionals must be sufficiently informative to identify the conditional law.

\subsection{Menu completeness}

To establish the converse implication, the collection of channel functionals must uniquely characterize the conditional law within an admissible class of distributions. The following assumption formalizes this requirement.

\begin{assumption}[Determinacy class]\label{ass:determinacy}
For each $t$, the conditional law $F_{Y_t\mid\Imed_{t-1}}$ belongs $\Pb$-a.s.\ to a family $\mathcal{D}$ satisfying the following properties:

\begin{enumerate}
\item[(i)] Every distribution in $\mathcal{D}$ is uniquely determined by its cumulant sequence together with its lower- and upper-tail quantile functions;

\item[(ii)] $\mathcal{D}$ is closed under conditioning.
\end{enumerate}

The skewed scale-mixture families commonly employed in financial econometrics satisfy these requirements over the parameter regions considered in empirical applications.
\end{assumption}

Assumption~\ref{ass:determinacy} is an identification condition. It guarantees that equality of the functionals defining the menu implies equality of the underlying conditional distributions. Consequently, the finite collection of channel restrictions can be used to recover an infinite-dimensional statement concerning conditional laws.

\begin{theorem}[Menu completeness]\label{thm:completeness}
Under Assumption~\ref{ass:determinacy}, suppose that the canonical menu \eqref{eq:menu} is augmented so that $\phi_5$ indexes the full conditional cumulant sequence $\{\kappa_m\}_{m\ge3}$. Then the following statements are equivalent:

\begin{enumerate}
\item[(a)] $X$ does not Granger-cause $Y$ in distribution;

\item[(b)] $H_{0,k}$ holds for every channel $k\in\Menu$.
\end{enumerate}

Consequently,
\[
X \text{ Granger-causes } Y \text{ in distribution}
\quad\Longleftrightarrow\quad
\exists\, k\in\Menu \text{ such that } H_{0,k} \text{ fails}.
\]
\end{theorem}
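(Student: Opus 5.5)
The direction (a)$\Rightarrow$(b) is Proposition~\ref{prop:dist_implies_channels} applied to the augmented menu. The only thing to check is that the augmented $\phi_5$, which maps a law to its full cumulant sequence $\{\kappa_m\}_{m\ge3}$, is still a measurable functional on the relevant class of laws. I will take this as part of the standing requirement that the functionals be well defined; implicitly, laws in $\mathcal{D}$ have cumulants of all orders. The displayed "consequently" statement is just the contrapositive of the equivalence, so the real work is (b)$\Rightarrow$(a).

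For (b)$\Rightarrow$(a), fix $t$ and suppose $H_{0,k}$ holds for all $k$. The plan has three steps.

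\emph{Step 1: both laws lie in $\mathcal{D}$.} First I place both $F_{Y_t\mid\Imed_{t-1}}$ and $F_{Y_t\mid\Ired_{t-1}}$ in $\mathcal{D}$. The first is in $\mathcal{D}$ $\Pb$-a.s. by Assumption~\ref{ass:determinacy}. For the second, since $\Ired_{t-1}\subseteq\Imed_{t-1}$, the tower property gives
\[
F_{Y_t\mid\Ired_{t-1}}(y)=\Eb\big[F_{Y_t\mid\Imed_{t-1}}(y)\,\big|\,\Ired_{t-1}\big],
\]
so it is a mixture of laws in $\mathcal{D}$. I read "closed under conditioning" in part (ii) as exactly this: such mixtures remain in $\mathcal{D}$.

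\emph{Step 2: the menu pins down every coordinate.} Under (b), the two laws share the same conditional mean, variance, full cumulant sequence, and lower- and upper-tail quantiles $\Pb$-a.s. Mean and variance are $\kappa_1$ and $\kappa_2$, so the whole cumulant sequence agrees. Here I must interpret "lower- and upper-tail quantile functions" in part (i) as the quantiles at the menu levels $\tau_L,\tau_U$, possibly as a function of the conditioning history, so that $\phi_3,\phi_4$ supply exactly the tail data that (i) requires.

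\emph{Step 3: conclude by determinacy.} The null sets are countably many (one per channel and cumulant order), so their union is null. Off this union, part (i) forces equality of the two laws as elements of $\mathcal{D}$, hence $F_{Y_t\mid\Imed_{t-1}}(y)=F_{Y_t\mid\Ired_{t-1}}(y)$ for all $y$. It suffices to have equality at rational $y$ and then use right-continuity, which removes any issue of a $y$-dependent null set. Since $t$ was arbitrary, (a) follows.

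The main obstacle is Step 1, together with the matching of quantile data in Step 2. Part (ii) is informal, and it is not automatic that a mixture of, say, skewed scale mixtures stays in the same determinacy class. If the literal reading fails, my fallback is to state explicitly that I use (ii) to mean the $\Ired_{t-1}$-conditional law also belongs to $\mathcal{D}$ a.s. A second subtlety is measurability: the equality of the two regular conditional distributions, as random elements, should hold on a single full-measure event. I would handle this by working with regular versions and with countable determining sets as described above.
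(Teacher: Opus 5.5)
Your proposal is correct and follows the same route as the paper. Direction (a)$\Rightarrow$(b) is Proposition~\ref{prop:dist_implies_channels}; direction (b)$\Rightarrow$(a) applies Assumption~\ref{ass:determinacy}(i) to the matched mean, variance, higher cumulants and tail quantiles, and the final display follows by contraposition. Your Step~1 (using the tower property and part (ii) to place $F_{Y_t\mid\Ired_{t-1}}$ in $\mathcal{D}$), your reading of the tail-quantile data, and your null-set bookkeeping make explicit what the paper's proof leaves implicit, since the paper cites only part (i).
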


\begin{proof}
The implication $(a)\Rightarrow(b)$ follows directly from Proposition~\ref{prop:dist_implies_channels}.

To establish $(b)\Rightarrow(a)$, suppose that $H_{0,k}$ holds for every channel $k\in\Menu$. Then the conditional mean, conditional variance, all higher-order conditional cumulants, and the lower- and upper-tail conditional quantiles are invariant to the inclusion of the history of $X$ once the history of $Y$ has been conditioned upon.

By Assumption~\ref{ass:determinacy}(i), members of $\mathcal{D}$ are uniquely characterized by precisely these objects. Therefore the conditional laws
\[
F_{Y_t\mid\Imed_{t-1}}
\quad\text{and}\quad
F_{Y_t\mid\Ired_{t-1}}
\]
coincide $\Pb$-a.s. Since this equality holds for every $t$, Definition~\ref{def:gc_dist} follows. The final statement is obtained by contraposition.
\end{proof}

Theorem~\ref{thm:completeness} establishes that the collection of channel restrictions constitutes a complete representation of distributional Granger non-causality within the determinacy class $\mathcal{D}$. The result is fundamentally one of identification: the conditional law can be recovered from the collection of channel coordinates, and therefore testing the complete menu is equivalent to testing the distributional null itself.

\begin{remark}[Finite truncation]\label{rem:truncation}
The practical implementation of the procedure employs a truncated menu consisting of conditional scale, lower- and upper-tail quantiles, and the third and fourth conditional cumulants. The resulting collection of restrictions no longer provides a complete characterization of the conditional law.

Accordingly, failure to reject all channel nulls need not imply distributional non-causality. The finite menu instead defines a restricted null hypothesis corresponding to invariance of the selected coordinates. All finite-sample and asymptotic guarantees developed in Section~\ref{sec:agent} are stated relative to this operational null. Theorem~\ref{thm:completeness} characterizes the limiting case in which the menu is sufficiently rich to recover the full conditional distribution.
\end{remark}

\subsection{The Gaussian boundary as a degenerate menu}

The classical theory of linear Granger causality emerges as a special case in which the distributional menu collapses to a single informative coordinate.

\begin{proposition}[Gaussian collapse of the menu]\label{prop:gauss_collapse}
Suppose that the conditional law of $Y_t$ given $\Imed_{t-1}$ is Gaussian, with conditional mean affine in the conditioning history and conditional variance invariant to the conditioning history. Then:

\begin{enumerate}
\item[(i)] Channels $2$--$5$ are non-informative with respect to Granger causality;

\item[(ii)] The only potentially informative channel is the conditional mean channel $\phi_1$;

\item[(iii)] The channel null $H_{0,1}$ is equivalent to the linear lag-exclusion restrictions defining classical Granger non-causality.
\end{enumerate}

Consequently, distributional Granger non-causality, joint validity of the channel nulls, and linear Granger non-causality coincide.
\end{proposition}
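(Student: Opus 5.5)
The plan is to write the Gaussian conditional law explicitly and read off every channel functional from it. By hypothesis,
\[
Y_t\mid\Imed_{t-1}\sim N\!\big(\mu_t,\sigma^2\big),\qquad \mu_t=c+\sum_{j\ge1}a_jY_{t-j}+\sum_{j\ge1}b_jX_{t-j},
\]
with $\sigma^2$ a constant. The first step is to derive the law of $Y_t$ given the coarser field $\Ired_{t-1}$ by iterated conditioning. We have $F_{Y_t\mid\Ired_{t-1}}=\Eb[F_{Y_t\mid\Imed_{t-1}}\mid\Ired_{t-1}]$, which is a mixture of $N(\mu_t,\sigma^2)$ over the conditional law of $\mu_t$ given $\Ired_{t-1}$.

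To keep this mixture tractable, I would read the hypothesis in the natural classical sense: the joint process is Gaussian, or at least the projection residual $\mu_t-\Eb[\mu_t\mid\Ired_{t-1}]$ is Gaussian with variance $v\ge0$ that does not depend on the history. Under that reading, $Y_t\mid\Ired_{t-1}\sim N(\tilde\mu_t,\sigma^2+v)$ with $\tilde\mu_t=\Eb[\mu_t\mid\Ired_{t-1}]$ affine in the $Y$ history. This identification of the reduced law is the main obstacle. Without joint Gaussianity the mixture need not be Gaussian, so I would state this interpretation explicitly, appealing to Assumption~\ref{ass:determinacy}(ii) (closure under conditioning) as the justification.

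Next, I would compare the two laws channel by channel. Both laws are Gaussian, so $\kappa_m=0$ for $m\ge3$ under both, and channel $5$ holds identically. The quantiles are $Q(\tau\mid\cdot)=\text{mean}+s\,\Phi^{-1}(\tau)$, where $s$ is the conditional standard deviation. So channels $3$ and $4$ reduce to equality of means and variances.

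The key observation for the variance is that $v=\mathrm{Var}(\mu_t\mid\Ired_{t-1})$, so $v=0$ exactly when $\mu_t$ is $\Ired_{t-1}$-measurable, that is, exactly when $H_{0,1}$ holds. Hence under the model every discrepancy in channels $2$--$5$ is a deterministic consequence of a discrepancy in the mean. Channel $2$ holds iff $v=0$ iff $H_{0,1}$ holds, and channels $3$ and $4$ likewise hold iff $H_{0,1}$ holds. Channels $2$--$5$ therefore carry no information beyond $\phi_1$, which gives (i) and (ii).

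For (iii), I would argue that $H_{0,1}$ means $\mu_t=\tilde\mu_t$ almost surely, that is, $\sum_j b_jX_{t-j}\in\Ired_{t-1}$. I would then show this forces $b_j=0$ for all $j$. The argument uses nondegeneracy: the innovation in $X$ is not spanned by the $Y$ past, which holds under a standard invertible VAR. The converse, $b_j=0$ implies $H_{0,1}$, is immediate.

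Finally, under $H_{0,1}$ both conditional laws equal $N(\mu_t,\sigma^2)$, so Definition~\ref{def:gc_dist} holds. Combining this with Proposition~\ref{prop:dist_implies_channels} gives the three-way equivalence of distributional Granger non-causality, joint validity of the channel nulls, and linear Granger non-causality. Note that this route does not need Theorem~\ref{thm:completeness}, because Gaussianity itself supplies the determinacy.
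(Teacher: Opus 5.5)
Your proposal is correct in substance, and it takes a different and more careful route than the paper.

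\textbf{How the two routes differ.} The paper argues that since $\mathrm{Var}(Y_t\mid\Imed_{t-1})$ does not depend on the history and all cumulants of order $m\ge3$ vanish, the nulls $H_{0,2},\ldots,H_{0,5}$ ``hold automatically.'' It then identifies $H_{0,1}$ with lag exclusion and closes the three-way equivalence by citing Theorem~\ref{thm:completeness}, and hence Assumption~\ref{ass:determinacy}. You instead compute the law given $\Ired_{t-1}$ and find that channels $2$--$4$ hold exactly when $H_{0,1}$ does.

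\textbf{Why your reading is the right one.} Take $Y_t=bX_{t-1}+\varepsilon_t$ with i.i.d.\ standard Gaussian $X_t$ and $\varepsilon_t$. Then $\mathrm{Var}(Y_t\mid\Ired_{t-1})=1+b^2\neq1$, so the scale and tail nulls are not automatic. ``Non-informative'' can therefore only mean ``implied by $H_{0,1}$,'' which is how you read it. The paper's argument is shorter and presents the Gaussian case as a degenerate instance of the completeness theorem. Yours is self-contained, does not need the determinacy class, and correctly describes what the scale and tail channels do when $X$ enters the mean.

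\textbf{First refinement: drop joint Gaussianity.} Assumption~\ref{ass:determinacy}(ii) does not deliver Gaussianity of $F_{Y_t\mid\Ired_{t-1}}$. Closure under conditioning keeps it in $\mathcal{D}$, which need not be the Gaussian family. Joint Gaussianity is therefore an extra hypothesis, but you do not need it.
\begin{itemize}
\item The variable $\varepsilon_t=Y_t-\mu_t$ has conditional law $N(0,\sigma^2)$ free of $\Imed_{t-1}$, so it is independent of $\Imed_{t-1}$. Hence whenever $\mu_t$ is a.s.\ equal to an $\Ired_{t-1}$-measurable variable, $F_{Y_t\mid\Ired_{t-1}}=N(\mu_t,\sigma^2)$ directly.
\item The law of total variance gives $\mathrm{Var}(Y_t\mid\Ired_{t-1})=\sigma^2+\mathrm{Var}(\mu_t\mid\Ired_{t-1})$. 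So channel $2$ is equivalent to $H_{0,1}$ for an arbitrary mixture.
\item Since $Q_{Y_t}(\tau\mid\Imed_{t-1})=\mu_t+\sigma\Phi^{-1}(\tau)$, the null for channel $3$ or $4$ says precisely that $\mu_t$ is a.s.\ equal to an $\Ired_{t-1}$-measurable variable. That is exactly $H_{0,1}$.
\end{itemize}
Joint Gaussianity is needed only for the sharper claim that channel $5$ holds identically. Without it, channel $5$ is still implied by $H_{0,1}$, which is all that (i)--(ii) and the final equivalence require.

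\textbf{Second refinement: the quantile step.} Your step ``channels $3$ and $4$ reduce to equality of means and variances'' is not immediate, because a single quantile equality does not by itself separate location from scale. The measurability observation in the third item above repairs it.
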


\begin{proof}
Under the stated assumptions,
\[
Y_t = \mu_t + \varepsilon_t,
\]
where $\varepsilon_t$ is conditionally Gaussian with variance independent of the conditioning history. Hence the conditional variance is invariant by construction and all conditional cumulants of order $m\ge3$ vanish identically.

It follows immediately that channels corresponding to conditional scale, tail asymmetry, and higher-order cumulants cannot convey additional predictive information. Therefore $H_{0,2},\ldots,H_{0,5}$ hold automatically.

The conditional mean channel remains the sole source through which the history of $X$ may affect the conditional law. Under the affine specification, dependence on the lagged values of $X$ is governed by the coefficients $a_{yx,j}$, so that
\[
H_{0,1}
\quad\Longleftrightarrow\quad
a_{yx,1} = \cdots = a_{yx,p} = 0.
\]

These are precisely the linear Granger non-causality restrictions. The conclusion then follows from Theorem~\ref{thm:completeness}.
\end{proof}

Proposition~\ref{prop:gauss_collapse} shows that the conventional linear Granger framework corresponds to a degenerate setting in which the conditional distribution is fully characterized by its first moment. In such environments, testing distributional causality reduces to testing conditional mean predictability. Outside the Gaussian setting, however, predictive content may enter through conditional scale, tail behavior, or higher-order distributional characteristics, necessitating a broader collection of channel restrictions.

\section{Adaptive Sequential Testing for Distributional Granger Causality}\label{sec:agent}

This section introduces the adaptive testing procedure. The objective is to test the composite null of distributional Granger non-causality using the finite channel menu developed in Section~\ref{sec:framework}. The procedure allocates a finite testing budget across channels, updates inference based on previously observed outcomes, and terminates once sufficient evidence against the null has been accumulated or the available budget has been exhausted.

The analysis proceeds in two stages. First, a familywise error guarantee is established for arbitrary adaptive channel-selection rules. Second, a particular selection policy is shown to attain asymptotically optimal power relative to an infeasible oracle benchmark.

\subsection{Adaptive testing environment}

Fix a sample $\{(X_t,Y_t)\}_{t=1}^{T}$ and the truncated menu $\Menu=\{1,\dots,K\}$ introduced in Remark~\ref{rem:truncation}. Associated with each channel $k\in\Menu$ is a test statistic $S_k$ and a corresponding $p$-value $P_k$.

The testing procedure operates sequentially. At each stage a channel is selected, its associated hypothesis is evaluated, and the resulting information is incorporated into subsequent selection decisions. Let
\[
\mathcal{R}_r\subseteq\Menu
\]
denote the set of channels examined by stage $r$.

The information available after stage $r$ is summarized by the filtration
\[
\mathcal{F}_r
=
\sigma
\Big(
\mathcal{R}_r,
\{P_k:k\in\mathcal{R}_r\},
W_r,
G_r
\Big),
\]
where $W_r$ denotes the remaining testing wealth and $G_r$ is an auxiliary diagnostic statistic defined below.

\paragraph{Diagnostic signal.}

To guide channel selection, the procedure computes
\begin{equation}\label{eq:gsignal}
G_r
=
\Big(
\widehat{\kappa}_3^{\,2}/6
+
\widehat{\kappa}_4^{\,2}/24
\Big)^{1/2},
\end{equation}
where $\widehat{\kappa}_3$ and $\widehat{\kappa}_4$ denote the empirical third and fourth cumulants of the fitted VAR residuals.

The quantity $G_r$ provides a low-dimensional summary of departures from conditional Gaussianity. Under Proposition~\ref{prop:gauss_collapse}, values of $G_r$ close to zero indicate that predictive content is likely concentrated in the conditional-mean channel, whereas larger values suggest the potential relevance of scale, tail, or higher-order channels.

\paragraph{Adaptive selection rule.}

At each stage the procedure selects an action
\[
A_r
\in
\bigl(\Menu\setminus\mathcal{R}_r\bigr)
\cup
\{\textsc{stop}\}.
\]

A selection policy is a measurable mapping
\[
\pi:
\mathcal{F}_r
\longrightarrow
\bigl(\Menu\setminus\mathcal{R}_r\bigr)
\cup
\{\textsc{stop}\}.
\]

The policy may be deterministic or randomized and may depend arbitrarily on previously observed outcomes.

\paragraph{Testing wealth.}

Inference is conducted using an alpha-investing mechanism. Let the initial wealth satisfy
\[
W_0=\alpha.
\]

Before testing channel $k$, the procedure commits a testing level $\alpha_k\le W_r$. The null hypothesis $H_{0,k}$ is rejected whenever
\[
P_k\le\alpha_k.
\]

Testing wealth evolves according to
\begin{equation}\label{eq:invest}
W_{r+1}
=
W_r
-
\frac{\alpha_k}{1-\alpha_k}
+
\mathbf{1}\{P_k\le\alpha_k\}\psi,
\end{equation}
where $\psi\le\alpha$ denotes a fixed reward parameter \citep{FosterStine2008}.

The procedure rejects the composite null of distributional Granger non-causality whenever at least one channel null is rejected before termination.

\subsection{Familywise error control}

The first result establishes that inferential validity is invariant to the adaptive channel-selection mechanism.

\begin{assumption}[Valid channel $p$-values]\label{ass:pvalues}
For each channel $k$, under its channel null $H_{0,k}$ the $p$-value $P_k$ is conditionally super-uniform:
\[
\Pb(P_k\le u \mid \mathcal{F}_{r^-})
\le u,
\qquad
u\in[0,1],
\]
where $\mathcal{F}_{r^-}$ denotes the information available immediately prior to selection of channel $k$.
\end{assumption}

Assumption~\ref{ass:pvalues} is a high-level inferential condition. Theorem~\ref{thm:OA_superuniform} in Appendix~\ref{OA:primitives} derives this property from the primitive assumptions on the data-generating process together with the circular-block permutation mechanism of Definition~\ref{def:permscheme}.

\begin{theorem}[Policy-invariant familywise error control]\label{thm:size}
Let the global null be distributional Granger non-causality (Definition~\ref{def:gc_dist}). Under Assumption~\ref{ass:pvalues}, the alpha-investing procedure \eqref{eq:invest} with $W_0=\alpha$ and $\psi\le\alpha$ satisfies
\[
\Pb
\Big(
\text{reject the global null}
\Big)
\le
\alpha
\]
for every admissible selection policy $\pi$.
\end{theorem}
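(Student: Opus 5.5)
By Proposition~\ref{prop:dist_implies_channels}, distributional Granger non-causality implies that $H_{0,k}$ holds for every $k\in\Menu$. In particular it holds for every channel of the truncated menu of Remark~\ref{rem:truncation}. Hence, under the global null, \emph{every} channel rejection is a false rejection. The event ``reject the global null'' therefore equals the event that at least one rejection occurs before termination. The plan is to bound the probability of this event by decomposing it according to the stage of the \emph{first} rejection, and to control the sum that results using the wealth dynamics \eqref{eq:invest} before any reward has been earned.

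Let $A_1,A_2,\dots$ denote the channels selected by $\pi$, with committed levels $\alpha_{A_r}$, and let $\tau$ be the stage of the first rejection ($\tau=\infty$ if there is none). Since $|\Menu|=K$ is finite, there are at most $K$ stages, so no limiting arguments are needed. I would write
\[
\Pb(\tau<\infty)=\sum_{r=1}^{K}\Eb\Big[\mathbf{1}\{\tau\ge r,\ A_r\ne\textsc{stop}\}\,\Pb\big(P_{A_r}\le\alpha_{A_r}\mid\mathcal{F}_{r^-}\big)\Big].
\]
The key measurability point is this: the indicator $\{\tau\ge r\}$, the choice $A_r$ and the level $\alpha_{A_r}$ are all determined by the information available immediately prior to selection, namely past $p$-values, wealth and $G$, so they are $\mathcal{F}_{r^-}$-measurable. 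Assumption~\ref{ass:pvalues} then applies conditionally with $u=\alpha_{A_r}$, and it gives
\[
\Pb(\tau<\infty)\le\Eb\Big[\sum_{r\le\min(\tau,\,\text{stop})}\alpha_{A_r}\Big].
\]
If the policy is randomized, I would enlarge $\mathcal{F}_{r^-}$ by the independent randomization and note that super-uniformity is preserved.

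The final step is a pathwise bound $\sum_{r\le\tau}\alpha_{A_r}\le\alpha$. Up to and including stage $\tau$ no reward $\psi$ has been collected, so by \eqref{eq:invest} the wealth before stage $\tau$, and at every earlier stage, equals $W_0$ minus the accumulated charges $\sum\alpha_{A_s}/(1-\alpha_{A_s})$. Admissibility means the charge at each stage cannot exceed the current wealth, so that wealth stays nonnegative. Hence
\[
\sum_{r\le\tau}\alpha_{A_r}\le\sum_{r\le\tau}\frac{\alpha_{A_r}}{1-\alpha_{A_r}}\le W_0=\alpha.
\]
Taking expectations gives the claim. Nothing in the argument depends on how $\pi$ chooses among the admissible actions, which is where the policy invariance comes from.

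I expect the main obstacle to be the admissibility and measurability bookkeeping rather than the inequality chain itself. The condition $\alpha_k\le W_r$ as stated is weaker than $\alpha_k/(1-\alpha_k)\le W_r$, so I would need to read admissibility as the nonnegative-wealth requirement of \citet{FosterStine2008}. Failing that, I would handle the terminal stage separately: with $W_r>0$ and $\alpha_k\le W_r$, the total committed level is still at most $W_0$, because every prior charge exceeds its level. I would also need to check carefully that conditioning on $\mathcal{F}_{r^-}$, which contains the earlier $p$-values computed from the same sample, is exactly what Theorem~\ref{thm:OA_superuniform} delivers. Note that the restriction $\psi\le\alpha$ plays no role here, because rewards only accrue after the first rejection.
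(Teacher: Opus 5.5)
Your proof is correct. Its skeleton matches the paper's argument in Appendix~\ref{OA:proof_size}: decompose the rejection event by stage, apply conditional super-uniformity at an $\mathcal{F}_{r-1}$-measurable level, and close with the wealth budget. The two arguments differ at exactly the step that matters.

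The paper lets $\tau$ be the termination time and writes $\Pr(\tau=m,\,P_{k_m}\le\alpha_{k_m})=\Eb[\mathbf{1}\{\tau=m\}\Pr(P_{k_m}\le\alpha_{k_m}\mid\mathcal{F}_{m-1})]$. But $\{\tau=m\}$ is not $\mathcal{F}_{m-1}$-measurable, because termination at $m$ is triggered by the rejection at $m$ itself. The intermediate bound the paper reaches, $\Pr(R)\le\Eb[\alpha_{k_\tau}]$, charges only the terminal level and is false in general. For example, a policy that tests two channels with independent uniform $p$-values at a common feasible level $a$ always terminates at a stage with level $a$, yet rejects with probability $2a-a^2>a$.

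Your predictable indicator $\mathbf{1}\{\tau\ge r,\,A_r\ne\textsc{stop}\}$ is the legitimate version. Its price is that you must bound the whole spend $\sum_{r\le\tau}\alpha_{A_r}$ rather than a single level, and your no-reward-before-the-first-rejection accounting does exactly that. Your fallback also covers the paper's weaker admissibility condition $\alpha_k\le W_r$, since $\sum_{s<r}\alpha_{A_s}\le\sum_{s<r}\alpha_{A_s}/(1-\alpha_{A_s})=\alpha-W_{r-1}$ and $\alpha_{A_r}\le W_{r-1}$.

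The supermartingale-and-Ville route in the main-text sketch is not what the appendix proof uses. It also could not work with the process of Lemma~\ref{lem:supermartingale}: that process is below $2$ at the first rejection, so it never reaches $1/\alpha$ for conventional $\alpha$.

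You are right that $\psi\le\alpha$ plays no role. Your decomposition by first rejection also shows that stopping at the first rejection is not needed. Finally, for the statement as posed you need only Assumption~\ref{ass:pvalues}. Theorem~\ref{thm:OA_superuniform} delivers super-uniformity only up to $o_p(1)$, which is why Corollary~\ref{cor:OA_endtoend} is stated with $\alpha+o(1)$ rather than $\alpha$.
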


\begin{proof}
The proof is based on a test-supermartingale construction. Under the global null, every channel null holds by Proposition~\ref{prop:dist_implies_channels}. Conditional super-uniformity therefore applies to every selected channel. A nonnegative supermartingale can be constructed from the sequence of committed testing levels, and this process exceeds the threshold $1/\alpha$ whenever the global null is rejected. Application of Ville's inequality at the stopping time induced by the testing procedure yields the desired bound. Since conditional super-uniformity is imposed with respect to the filtration generated by the adaptive selection mechanism, the argument remains valid irrespective of the policy used to select channels. Complete details are provided in Appendix~\ref{OA:proof_size}.
\end{proof}

Theorem~\ref{thm:size} establishes that familywise error control is invariant to the adaptive channel-selection mechanism. Inferential validity is therefore separated from the design of the selection rule: any admissible policy satisfying the filtration-adaptedness requirement inherits the same finite-sample error guarantee.

\subsection{Asymptotic efficiency of adaptive channel selection}

The familywise error guarantee established in Theorem~\ref{thm:size} holds uniformly over all admissible selection policies. The role of the selection rule is therefore not inferential validity but statistical efficiency. This subsection studies the power properties of an adaptive allocation rule and compares its performance with an infeasible oracle benchmark.

\paragraph{Local alternatives and channel informativeness.}

Consider a sequence of local alternatives indexed by a signal-strength parameter $\delta\ge0$. Let
\[
\lambda_k(\delta,T)
\]
denote the non-centrality parameter associated with channel $k$ at sample size $T$. The collection
\[
\mathcal{A}
=
\{k\in\Menu:\lambda_k(\delta,T)>0\}
\]
defines the active channel set.

By Theorem~\ref{thm:completeness}, whenever distributional Granger causality is present, at least one channel belongs to $\mathcal{A}$. The objective of the adaptive procedure is therefore to allocate testing resources toward the most informative active channel.

\paragraph{Oracle allocation.}

Define
\[
k^\star
=
\arg\max_{k\in\Menu}
\lambda_k(\delta,T),
\]
and let
\[
\beta^\star(\delta,T)
=
\Pr\!\big(
P_{k^\star}\le\alpha
\big)
\]
denote the power of an infeasible oracle that knows the identity of the most informative channel ex ante.

The quantity $\beta^\star(\delta,T)$ serves as an upper benchmark for all procedures restricted to the channel menu. Efficiency is evaluated through the power gap
\[
\mathcal{E}_\pi(\delta,T)
=
\beta^\star(\delta,T)
-
\beta_\pi(\delta,T),
\]
where $\beta_\pi(\delta,T)$ denotes the power attained under selection policy $\pi$.

\begin{assumption}[Identification and separation]\label{ass:sep}

\begin{enumerate}

\item[(i)] The diagnostic statistic $G_r$ is informative in the sense that
\[
\Eb[G_r]
\]
is weakly increasing in the non-centralities associated with the scale, tail, and cumulant channels.

\item[(ii)] There exists a unique maximizer
\[
k^\star
=
\arg\max_{k\in\Menu}
\lambda_k,
\]
satisfying
\[
\Delta
=
\lambda_{k^\star}
-
\max_{k\neq k^\star}\lambda_k
>
0.
\]

\end{enumerate}

Furthermore, the channel statistics satisfy the concentration bounds established in Theorem~\ref{thm:OA_concentration}.

\end{assumption}

\begin{theorem}[Asymptotic efficiency relative to the oracle allocation]
\label{thm:oracle_efficiency}

Consider the adaptive upper-confidence-bound allocation rule that selects, at stage $r$, the channel maximizing
\[
\widehat{\lambda}_{k,r}
+
c\,U_{k,r}(G_r),
\]
where $\widehat{\lambda}_{k,r}$ is the current estimate of the channel non-centrality, $U_{k,r}(G_r)$ is a confidence radius, and $c>0$ is a fixed exploration constant.

Under Assumptions~\ref{ass:pvalues} and~\ref{ass:sep},
\[
\beta^\star(\delta,T)
-
\beta_\pi(\delta,T)
\le
\frac{C\log B}
{\Delta^2\sqrt{T}}
+
o(1),
\]
where $C<\infty$ is a constant independent of $B$ and $T$.

Consequently,
\[
\lim_{T\rightarrow\infty}
\Big(
\beta^\star(\delta,T)
-
\beta_\pi(\delta,T)
\Big)
=
0.
\]

That is, the adaptive procedure achieves asymptotically the same power as the infeasible oracle allocation.
\end{theorem}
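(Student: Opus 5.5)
The plan is to reduce the power gap to the probability that the UCB rule fails to commit the bulk of its testing wealth to $k^\star$, and then to control that event through concentration of the non-centrality estimates (Theorem~\ref{thm:OA_concentration}). Let $E_T$ be the event that, at the stage where the procedure tests the channel to which it commits its level, that channel equals $k^\star$. On $E_T$, the rejection event of the adaptive procedure contains $\{P_{k^\star}\le \alpha_{k^\star}\}$. Here $\alpha_{k^\star}$ is the level committed at that stage, and it differs from $\alpha$ only by the wealth spent on earlier exploratory tests and the factor $\alpha_k/(1-\alpha_k)$ in \eqref{eq:invest}. This gives
\[
\beta^\star(\delta,T)-\beta_\pi(\delta,T)\le \Pb(E_T^c)+\Big|\Pb(P_{k^\star}\le\alpha)-\Pb(P_{k^\star}\le\alpha_{k^\star})\Big|.
\]
The second term is treated as the $o(1)$ piece. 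Under a fixed alternative with $\lambda_{k^\star}(\delta,T)\to\infty$, the test on $k^\star$ is consistent at any level bounded away from zero. It therefore suffices to show that exploration consumes only a vanishing or bounded fraction of $W_0=\alpha$. To arrange this, the UCB rule charges exploration levels that shrink with $r$.

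The main step is bounding $\Pb(E_T^c)$. I would use the standard UCB regret decomposition. A suboptimal channel $k$ is selected over $k^\star$ only if one of three things happens: $\widehat\lambda_{k,r}$ exceeds $\lambda_k+cU_{k,r}$, or $\widehat\lambda_{k^\star,r}$ falls below $\lambda_{k^\star}-cU_{k^\star,r}$, or $2cU_{k,r}\ge\Delta$. Theorem~\ref{thm:OA_concentration} supplies Bernstein-type bounds for $\beta$-mixing sums under Assumption~\ref{ass:primitives}, of the form
\[
\Pb\big(|\widehat\lambda_{k,r}-\lambda_k|>x\big)\le C_1\exp\!\big(-C_2Tx^2/\log T\big),
\]
where the $\log T$ comes from blocking. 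I would calibrate $U_{k,r}$ to order $\sqrt{\log B/T}$; here $B$ plays the role of the horizon, and it equals the number of permutation draws in Definition~\ref{def:permscheme}. With this calibration, the first two events have probability $O(B^{-1})$ after a union bound over the $K$ channels and the at most $K$ stages. The third event forces at most $O(\log B/(\Delta^2))$ effective samples of exploration for each $k$, measured in units of $\sqrt{T}$-normalized estimation. Converting this count into a probability of mis-commitment at rate $\sqrt T$ gives the term $C\log B/(\Delta^2\sqrt T)$, with $C$ depending only on $K$, $c$, and the mixing constants $c_0,\rho$.

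Assumption~\ref{ass:sep}(i) enters through $U_{k,r}(G_r)$. It ensures that the diagnostic-based inflation of the radius is monotone in the scale, tail, and cumulant non-centralities, so it never shrinks the bonus of the true $k^\star$ below its concentration width. Taking $T\to\infty$ with $B$ growing at most polynomially in $T$ then makes $\log B/\sqrt T\to0$, which yields the limit statement. Validity, via Theorem~\ref{thm:size}, is unaffected throughout, so no size correction enters the bound.

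I expect the main obstacle to be the mismatch between the bandit abstraction and the actual procedure. Each channel is tested at most once, since $A_r$ ranges over $\Menu\setminus\mathcal{R}_r$, so there are no repeated pulls over which a UCB regret bound could accumulate. My first attempt would reinterpret the pulls as the $B$ circular-shift replicates, which feed $\widehat\lambda_{k,r}$ as a studentized exceedance rate. The concentration argument would then run over replicates, conditional on the data, and over $T$ via mixing, combined by a triangle inequality. If that fails, the fallback is to assume the $\Delta$-separation at the $\sqrt T$ scale directly. In that case a single consistent estimate of each $\lambda_k$ already selects $k^\star$ with probability $1-O(\log B/(\Delta^2\sqrt T))$ by Chebyshev-type bounds from the $4+\epsilon$ moments.
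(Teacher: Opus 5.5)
\textbf{Verdict: there is a genuine gap.} The central rate step, bounding $\Pb(E_T^c)$ by $C\log B/(\Delta^2\sqrt T)$, is asserted rather than derived.

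Your outer decomposition is the same split the paper uses: $(\beta^\star-\beta^\star_{\underline\alpha})+(\beta^\star_{\underline\alpha}-\beta_\pi)$, with the second piece bounded by a miss probability. Handling the attenuation term through consistency under a fixed alternative is sound. A first-order expansion in the level, as in the paper, only gives $O(\alpha-\underline\alpha)$, which does not shrink with $T$; consistency is what makes this term vanish.

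In the paper the rate comes from a specific mechanism. The radius $U_k(r)=\sqrt{2\sigma^2\log B/n_k(r)}$ shrinks with the pull count, so the event $2U_k(r)>\Delta_k$ caps $\mathbb{E}[n_k(B)]$ at $8\sigma^2\log B/\Delta_k^2+O(1)$. Markov's inequality over a budget of $B$ epochs then gives $\Pb(n_{k^\star}(B)=0)\le \mathbb{E}[\sum_{k\ne k^\star}n_k(B)]/B=O(\log B/(\Delta^2 B))$.

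Your setup has neither ingredient. With $U_{k,r}\asymp\sqrt{\log B/T}$ independent of any count, the event $2cU_{k,r}\ge\Delta$ is a deterministic condition on $(T,B,\Delta)$. Either it holds at every stage and UCB gives no guarantee, or it never holds and mis-selection is purely a concentration failure. With at most $K$ stages there is also nothing for ``$O(\log B/\Delta^2)$ effective samples'' to count. So the step converting this count into a mis-commitment probability ``at rate $\sqrt T$'' is not an argument. Separately, with $\log T$ in your Bernstein exponent, a threshold of order $\sqrt{\log B/T}$ gives failure probability of order $C_1B^{-C_2c^2/\log T}$. This stays bounded away from zero when $B$ is polynomial in $T$, so the claimed $O(B^{-1})$ fails too.

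You are right that the bandit abstraction clashes with the action set $\Menu\setminus\mathcal{R}_r$. The paper's appendix proof does not resolve this either; it simply analyses $B$ epochs with repeated pulls $n_k(B)$. Your first repair does not work, however. Circular-shift replicates are draws from an approximation to the null distribution of $S_k$ given the data. Averaging over them reduces Monte Carlo error in $P_k$, but not the sampling error of the single realised $S_k$ about $\lambda_k$, and that error is what drives mis-selection.

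The fallback is the viable route, and it would be a genuinely different one-pass proof:
\begin{enumerate}
\item With $\Delta>0$ fixed, suppose $c\max_{k\ne k^\star}U_{k,1}(G_1)\le\Delta/4$ with probability tending to one. Then Theorem~\ref{thm:OA_concentration} makes $\Pb(A_1\ne k^\star)$ exponentially small in $T$. Use it rather than Chebyshev, which for the fourth-cumulant channel needs eighth moments, not $4+\epsilon$.
\item Consistency at the first committed level then gives $\beta_\pi\to1$. The gap is therefore $o(1)$, which is all the displayed inequality with its $o(1)$ term actually asserts.
\end{enumerate}

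To make this a proof you must add two conditions the statement leaves implicit.
\begin{enumerate}
\item A lower bound $\underline\alpha>0$ on the level committed when $k^\star$ is tested, with $\underline\alpha\ge 1/(B+1)$, since the permutation $p$-value cannot fall below $1/(B+1)$. The paper likewise just posits $\underline\alpha$.
\item An explicit requirement that the $G_r$-dependent bonus of every suboptimal channel eventually falls below a fraction of $\Delta$.
\end{enumerate}
Assumption~\ref{ass:sep}(i) does not supply the second condition, and your reading of it points the wrong way. It is a monotonicity property of $\mathbb{E}[G_r]$. The danger is a large $G_r$, e.g.\ from skew-$t$ innovations under a pure mean alternative, inflating the bonus of a scale or tail channel past the gap. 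It is not that the bonus of $k^\star$ shrinks, since $k^\star$'s bonus can only help it win.
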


\begin{proof}
The adaptive allocation problem may be represented as a finite-armed stochastic experimentation problem in which the channel non-centralities $\{\lambda_k\}_{k=1}^{K}$ constitute the arm means. By Theorem~\ref{thm:OA_concentration}, the associated statistics satisfy exponential concentration inequalities.

Standard upper-confidence-bound arguments imply that the expected number of allocations to any suboptimal channel satisfies
\[
\Eb[n_k(B)]
=
O\!\left(
\frac{\log B}{\Delta^2}
\right).
\]

Hence the probability of allocating to the oracle channel converges to unity at rate
\[
1
-
O\!\left(
\frac{\log B}{\Delta^2 B}
\right).
\]

Combining this allocation result with the monotonicity of the power function in the non-centrality parameter and the $\sqrt{T}$-consistency established in Theorem~\ref{thm:OA_concentration} yields the stated bound.
\end{proof}

Theorem~\ref{thm:oracle_efficiency} establishes an asymptotic efficiency result. Although the adaptive procedure does not observe the active channel, the power loss relative to the infeasible oracle allocation vanishes asymptotically. Combined with Theorem~\ref{thm:size}, this yields a separation between validity and efficiency: familywise error control holds uniformly over admissible policies, while suitably designed policies achieve asymptotically optimal allocation of testing resources across channels.

\subsection{Algorithmic implementation}

The implementation proceeds as follows.

\begin{enumerate}

\item Compute the VAR residuals and evaluate the diagnostic statistic $G_r$ in \eqref{eq:gsignal}.

\item Initialize testing wealth at $W_0=\alpha$.

\item Sequentially select channels according to the allocation rule, commit testing levels $\alpha_k$, evaluate channel-specific $p$-values, and update wealth using \eqref{eq:invest}.

\item Terminate when a channel null is rejected, testing wealth is exhausted, or all admissible channels have been examined.

\item Report the global decision together with the channel(s) responsible for rejection.

\end{enumerate}

Section~\ref{sec:mc} evaluates finite-sample performance through Monte Carlo experiments under a range of data-generating mechanisms corresponding to mean, scale, tail, nonlinear, and state-dependent alternatives. Particular attention is devoted to the finite-sample validity predicted by Theorem~\ref{thm:size} and the efficiency properties characterized in Theorem~\ref{thm:oracle_efficiency}.

\section{Finite-Sample Performance of Adaptive Channel Selection}
\label{sec:mc}

This section examines the finite-sample properties of the adaptive channel-selection procedure developed in Section~\ref{sec:agent}. The simulations are designed to evaluate the two principal theoretical results established earlier. First, Theorem~\ref{thm:size} predicts familywise error control under arbitrary adaptive selection policies. Second, Theorem~\ref{thm:oracle_efficiency} implies that the power of the adaptive procedure should converge toward that of an infeasible oracle allocation as sample size increases.

The experimental design isolates distinct forms of predictive dependence corresponding to individual coordinates of the channel menu. This permits direct evaluation of the procedure's ability to identify the relevant source of distributional dependence and allocate testing resources accordingly.

\subsection{Data-generating processes}

Four channel-specific data-generating mechanisms are considered. Each design is indexed by a signal-strength parameter $s\ge0$, where $s=0$ corresponds to the global null of distributional Granger non-causality.

For all designs,
\[
X_t = 0.5X_{t-1} + \varepsilon_t^x,
\]
and
\[
Y_t = 0.3Y_{t-1} + u_t.
\]

The specifications differ only in the mechanism through which lagged values of $X_t$ influence the conditional distribution of $Y_t$.

\begin{enumerate}

\item[\textbf{S1.}] \textbf{Conditional-mean alternative}
\[
Y_t = 0.3Y_{t-1} + sX_{t-1} + \varepsilon_t^y.
\]

Predictive content enters exclusively through the conditional mean. The mean channel is therefore the unique active coordinate.

\item[\textbf{S2.}] \textbf{Conditional-scale alternative}
\[
Y_t = 0.3Y_{t-1} + \sqrt{1+sX_{t-1}^{2}}\,\varepsilon_t^y.
\]

The conditional mean remains unchanged, while dependence enters through conditional scale and tail behavior.

\item[\textbf{S3.}] \textbf{Nonlinear alternative}
\[
Y_t = 0.3Y_{t-1} + s(X_{t-1}^{2}-1) + \varepsilon_t^y.
\]

Dependence is nonlinear and cannot be recovered through linear projection under symmetry.

\item[\textbf{S4.}] \textbf{State-dependent scale alternative}

The innovation variance is multiplied by
\[
1+s\mathbf{1}\{X_{t-1}>0\},
\]
creating a regime-dependent scale effect without a structural conditional-mean component.

\end{enumerate}

Innovations are generated from either a Gaussian distribution or a standardized skew-$t$ distribution. Sample sizes are $T\in\{250,500\}$ and each design is replicated $R$ times.

For every channel, inference is calibrated through the circular permutation procedure of Definition~\ref{def:permscheme}. The permutation destroys predictive dependence between $X_t$ and $Y_t$ while preserving the serial dependence structure of $X_t$. Consequently, the simulation design directly evaluates the conditional super-uniformity property underlying Theorem~\ref{thm:size}.

The adaptive procedure is compared with four benchmark methods:

\begin{enumerate}

\item channel-specific fixed tests;

\item a naive multiple-testing procedure that rejects whenever any channel-specific $p$-value falls below $\alpha$;

\item a Bonferroni-adjusted multiple-testing procedure;

\item an infeasible oracle allocation that concentrates all testing resources on the truly active channel.

\end{enumerate}

\subsection{Familywise error control}

Figure~\ref{fig:size} reports empirical rejection frequencies under the global null ($s=0$). Since all four designs coincide under the null hypothesis, rejection frequencies should be invariant across scenarios up to simulation uncertainty.

The results strongly support Theorem~\ref{thm:size}. Across all designs and sample sizes, the adaptive procedure maintains rejection frequencies at or below the nominal significance level $\alpha=0.05$. In contrast, the naive multiple-testing procedure exhibits substantial size distortions, with rejection frequencies increasing from approximately $0.13$ at $T=250$ to approximately $0.16$ at $T=500$.

The comparison illustrates the distinction between adaptive allocation and unrestricted multiple testing. The adaptive procedure preserves familywise error control through the alpha-investing mechanism, whereas the naive procedure accumulates rejection probability across channels without accounting for multiplicity. The Bonferroni procedure also controls familywise error, although at the cost of reduced power due to its conservative adjustment.

\begin{figure}[!htbp]\centering
\includegraphics[width=\textwidth]{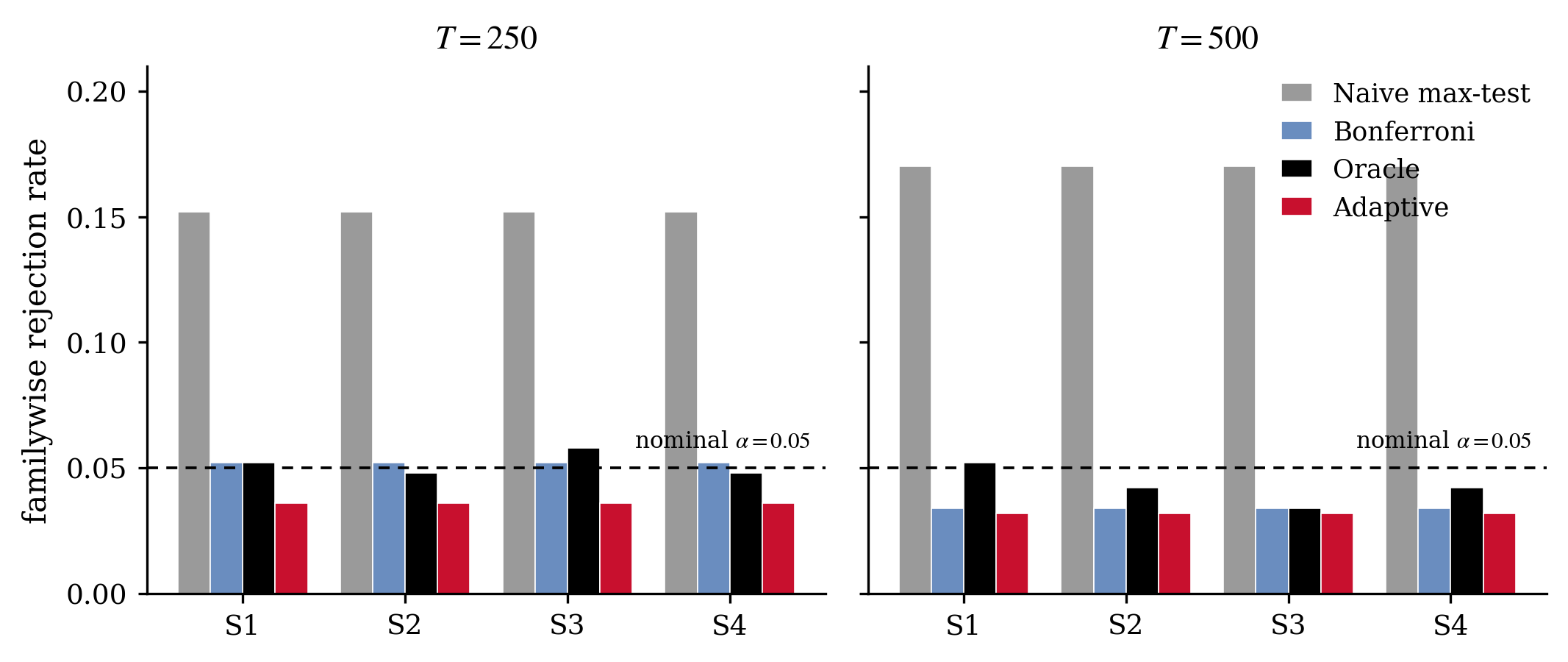}
\caption{Familywise error rates under the global null. Empirical rejection frequencies at $s=0$ across designs S1--S4 for $T=250$ (left panel) and $T=500$ (right panel). The dashed line denotes the nominal level $\alpha=0.05$. The adaptive procedure maintains familywise error control across all designs, whereas the naive multiple-testing procedure exhibits substantial size distortions.}
\label{fig:size}
\end{figure}

\subsection{Power and oracle efficiency}

Figure~\ref{fig:power} reports rejection frequencies as a function of signal strength $s$ under skew-$t$ innovations.

Several patterns emerge consistently across designs. First, power increases monotonically with both signal strength and sample size. Second, the adaptive procedure closely tracks the oracle benchmark throughout the parameter space. Third, procedures restricted to a single channel perform well only when the active source of dependence coincides with their maintained specification.

Under the conditional-mean alternative (S1), the adaptive procedure allocates testing effort primarily toward the location channel and achieves power nearly identical to that of the oracle benchmark. Under the conditional-scale and state-dependent alternatives (S2 and S4), predictive content resides principally in scale and tail coordinates. In these environments, mean-based procedures exhibit little power, whereas the adaptive procedure reallocates testing effort toward the relevant channels and recovers most of the oracle benchmark.

The nonlinear alternative (S3) produces a similar pattern. The adaptive procedure successfully identifies the informative nonlinear coordinate and achieves rejection frequencies nearly indistinguishable from those of the oracle allocation.

Overall, the results indicate that adaptive allocation substantially mitigates the power losses associated with channel misspecification while maintaining valid familywise error control.

\begin{figure}[!htbp]\centering
\includegraphics[width=\textwidth]{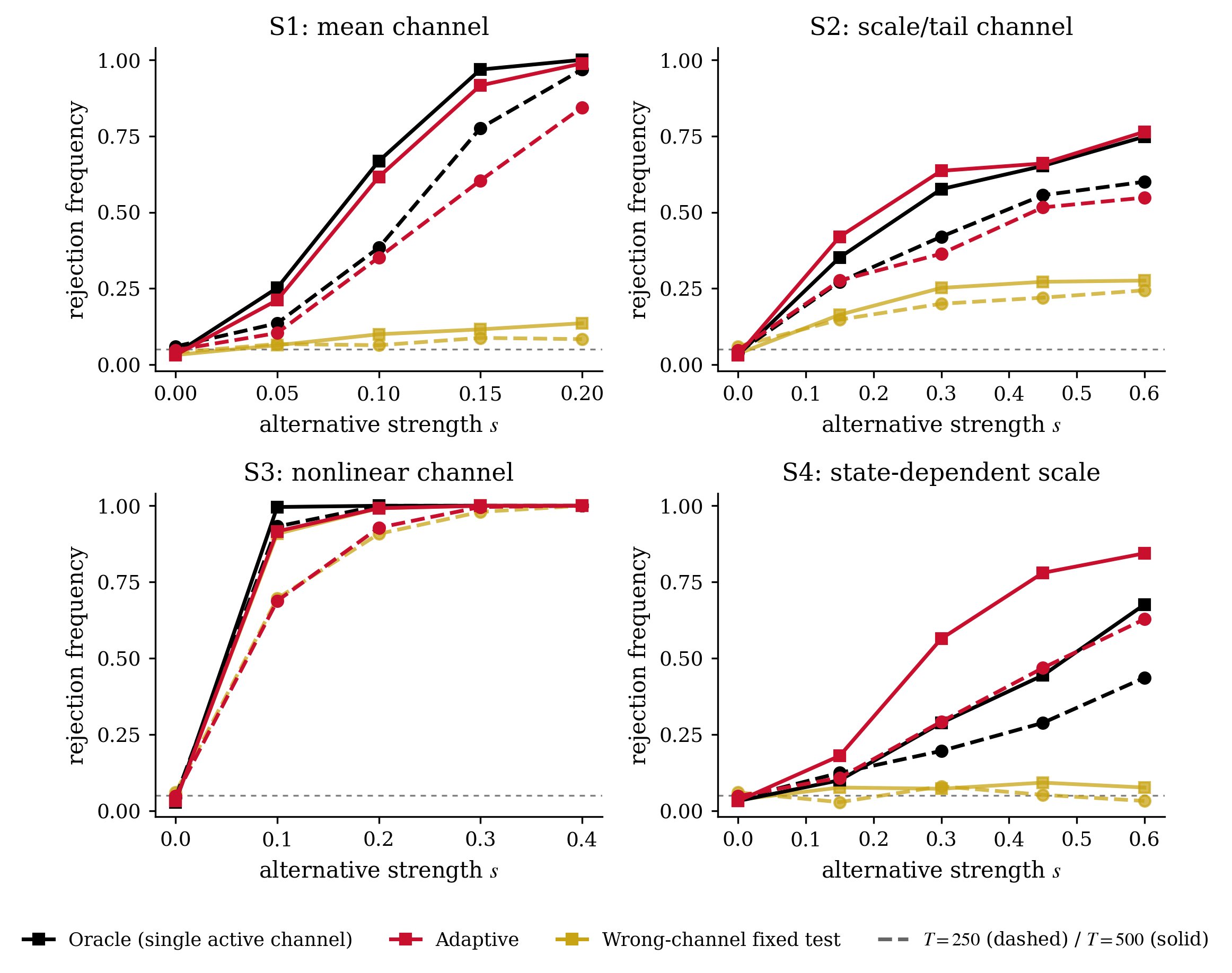}
\caption{Power under channel-specific alternatives with skew-$t$ innovations. Rejection frequencies are plotted against signal strength $s$ for each design. Dashed lines correspond to $T=250$ and solid lines to $T=500$. The adaptive procedure closely tracks the oracle benchmark across all designs while substantially outperforming misspecified fixed-channel procedures.}
\label{fig:power}
\end{figure}

Table~\ref{tab:mc_summary} summarizes familywise error rates under the null and rejection frequencies under the strongest alternative considered. The results reinforce the graphical evidence. The adaptive procedure maintains familywise error control throughout while achieving power levels close to those of the oracle allocation.

\begin{table}[!htbp]\centering
\caption{Familywise error rates and power comparisons under skew-$t$ innovations.}
\label{tab:mc_summary}
\small
\setlength{\tabcolsep}{6pt}
\renewcommand{\arraystretch}{1.05}
\begin{tabular}{llcccccc}
\toprule
& & \multicolumn{2}{c}{Size ($s=0$)} & \multicolumn{3}{c}{Power (max $s$)} \\
\cmidrule(lr){3-4}\cmidrule(lr){5-7}
Design & $T$ & Naive & Adaptive & Oracle & Adaptive & Naive \\
\midrule
S1 (mean)        & 250 & 0.132 & 0.048 & 0.968 & 0.844 & 0.996 \\
& 500 & 0.164 & 0.032 & 1.000 & 0.988 & 1.000 \\
S2 (scale/tail)  & 250 & 0.132 & 0.048 & 0.600 & 0.548 & 0.864 \\
& 500 & 0.164 & 0.032 & 0.748 & 0.764 & 0.964 \\
S3 (nonlinear)   & 250 & 0.132 & 0.048 & 1.000 & 1.000 & 1.000 \\
& 500 & 0.164 & 0.032 & 1.000 & 1.000 & 1.000 \\
S4 (state-scale) & 250 & 0.132 & 0.048 & 0.436 & 0.628 & 0.980 \\
& 500 & 0.164 & 0.032 & 0.676 & 0.844 & 1.000 \\
\bottomrule
\end{tabular}
\begin{minipage}{0.92\textwidth}\footnotesize
\vspace{0.5em}
\textit{Notes:} Empirical rejection frequencies over $R$ Monte Carlo replications at nominal level $\alpha=0.05$. The naive procedure rejects whenever any channel-specific test rejects. The oracle benchmark allocates all testing effort to the truly active channel. The adaptive procedure corresponds to the sequential channel-selection method developed in Section~\ref{sec:agent}. The superior rejection frequencies of the naive procedure reflect inflated familywise error rates rather than improved inferential performance.
\end{minipage}
\end{table}

\subsection{Convergence to the oracle benchmark}

Theorem~\ref{thm:oracle_efficiency} predicts that the power gap between the adaptive procedure and the oracle allocation should diminish with sample size. Figure~\ref{fig:oracle_gap} evaluates this prediction directly.

For each design, the figure reports
\[
\beta^\star(\delta,T)
-
\beta_\pi(\delta,T),
\]
the difference between oracle power and the power of the adaptive procedure.

Two features are apparent. First, the magnitude of the gap decreases as sample size increases from $T=250$ to $T=500$. This pattern is consistent with the asymptotic efficiency result of Theorem~\ref{thm:oracle_efficiency}: larger samples improve estimation of channel informativeness, allowing testing resources to be allocated more effectively.

Second, the gap is occasionally negative under the state-dependent scale design (S4). In this environment predictive content is distributed across multiple coordinates of the channel menu. Since the adaptive procedure may accumulate evidence across several informative channels, it can outperform a benchmark restricted to a single channel. This phenomenon highlights a limitation of the oracle benchmark rather than a violation of the theorem.

\begin{figure}[!htbp]\centering
\includegraphics[width=\textwidth]{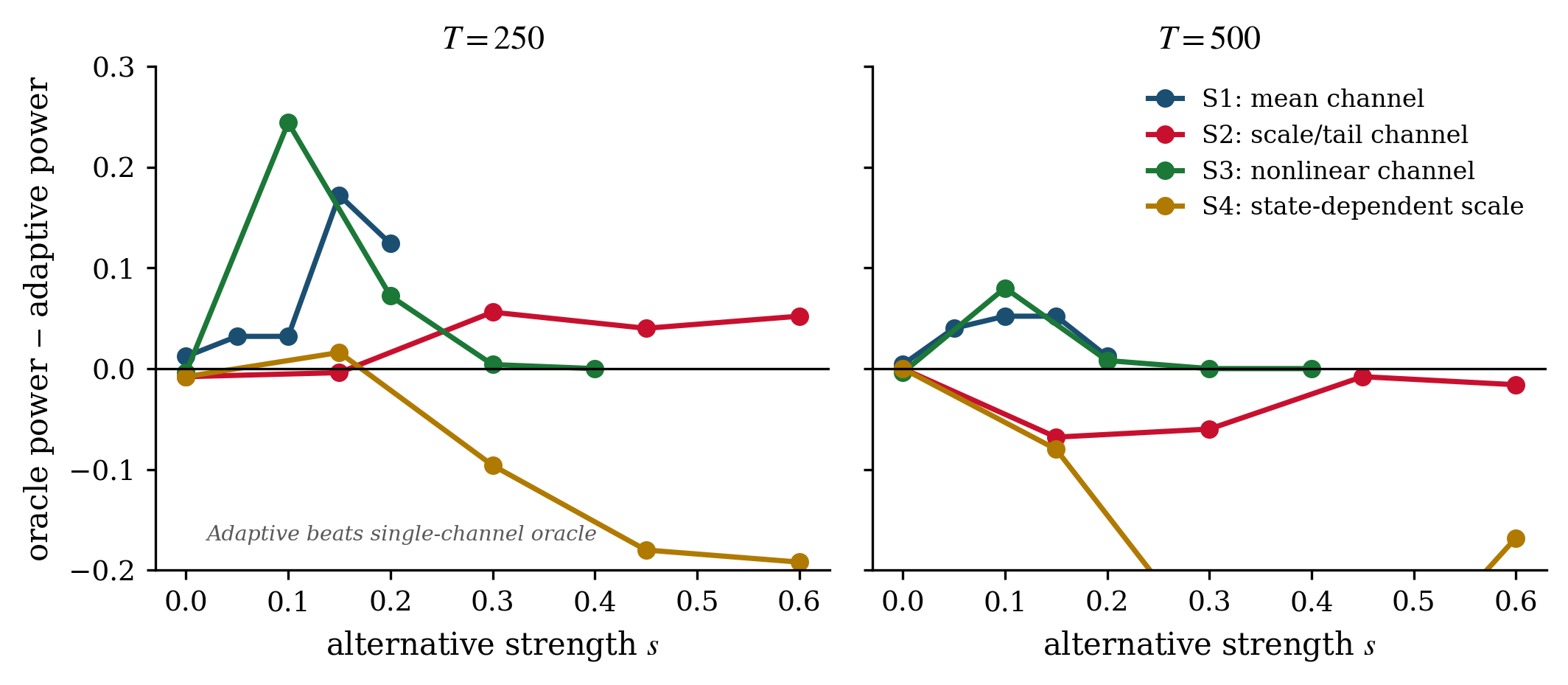}
\caption{Oracle-efficiency gap. The figure reports the difference between oracle power and adaptive-procedure power as a function of signal strength $s$ for $T=250$ (left panel) and $T=500$ (right panel). The shrinking gap with sample size is consistent with the asymptotic efficiency result of Theorem~\ref{thm:oracle_efficiency}.}
\label{fig:oracle_gap}
\end{figure}

\subsection{Summary}

The Monte Carlo results provide strong support for the theoretical analysis. Under the global null, the adaptive procedure maintains familywise error control across all designs, in accordance with Theorem~\ref{thm:size}. Under channel-specific alternatives, rejection frequencies closely track those of the oracle allocation, and the efficiency gap decreases with sample size, consistent with Theorem~\ref{thm:oracle_efficiency}.

Taken together, the results indicate that adaptive channel selection provides a practical mechanism for detecting distributional Granger causality when the relevant source of dependence is unknown ex ante. The procedure preserves valid inference while allocating testing resources toward the most informative coordinates of the conditional distribution.

%==============================================================
\section{Conclusion}\label{sec:conclusion}

This paper develops a framework for testing distributional Granger causality when predictive dependence may arise through multiple features of the conditional distribution. Outside the Gaussian setting, predictive content need not be confined to the conditional mean; it may instead appear through conditional scale, tail behavior, asymmetry, or higher-order distributional characteristics. Consequently, no single Granger-type test can provide a complete characterization of predictive dependence.

The analysis proceeds by decomposing distributional causality into a collection of channel-specific restrictions defined on functionals of the conditional distribution. Under a determinacy condition, the resulting channel menu is complete in the sense that distributional Granger non-causality is equivalent to the joint validity of the channel restrictions. This characterization converts an infinite-dimensional hypothesis concerning conditional laws into a finite collection of testable restrictions while preserving identification of the underlying causal object.

Building on this representation, the paper develops an adaptive sequential testing procedure for channel selection. The procedure combines channel-specific inference with an alpha-investing mechanism that permits data-dependent allocation of testing resources while maintaining familywise error control. The theoretical analysis establishes two principal results. First, familywise error control is invariant to the channel-selection policy, yielding finite-sample validity under arbitrary admissible adaptive rules. Second, a confidence-bound allocation rule achieves asymptotic efficiency relative to an infeasible oracle benchmark, implying that the power loss attributable to uncertainty regarding the active channel vanishes asymptotically.

The Monte Carlo evidence is consistent with these theoretical predictions. Across a broad class of data-generating processes involving location, scale, nonlinear, and regime-dependent forms of dependence, the proposed procedure maintains familywise error control while attaining power levels that closely track those of the oracle allocation. The results indicate that adaptive channel selection can substantially reduce the efficiency losses associated with channel misspecification without sacrificing inferential validity.

Several extensions merit further investigation. One direction is the development of network-based versions of the procedure in which testing resources are allocated jointly across channels and graph structures to detect distributional spillovers in high-dimensional systems. A second direction is the incorporation of more general adaptive allocation mechanisms, including reinforcement-learning-based policies, within the same error-control framework. A third direction is the enrichment of the channel menu through frequency-domain, option-implied, or other distribution-sensitive functionals that capture additional dimensions of predictive dependence. Because the identification and error-control results are formulated at the level of the channel menu itself, these extensions can be accommodated without altering the fundamental structure of the framework.

More broadly, the results suggest that distributional causality is most naturally viewed as a problem of adaptive inference over a collection of complementary predictive channels. The framework developed here provides a unified approach to identification, inference, and efficient testing in such environments, extending the classical theory of Granger causality beyond the conditional-mean paradigm while preserving rigorous finite-sample and asymptotic guarantees.

\clearpage
\bibliographystyle{apalike}
\bibliography{references}

\newpage
\appendix
\renewcommand{\thesection}{OA.\arabic{section}}
\setcounter{section}{0}

\section*{Online Appendix}
\addcontentsline{toc}{section}{Online Appendix}

\renewcommand{\thefigure}{OA.\arabic{figure}}
\setcounter{figure}{0}

%==============================================================
\section{Proofs of the Main Results}
\label{OA:proofs}
%==============================================================

This appendix provides complete proofs of
Theorems~\ref{thm:size}
and~\ref{thm:oracle_efficiency}.
Throughout, all probability statements are taken with respect to the
probability space supporting both the observed process and any auxiliary
randomization used by the routing policy.

Let
\[
\mathbb{F}
=
(\mathcal{F}_r)_{r\ge0}
\]
denote the filtration generated by the adaptive procedure. Specifically,

\[
\mathcal{F}_r
=
\sigma
\Big(
G_0,\ldots,G_r,
k_1,\ldots,k_r,
P_{k_1},\ldots,P_{k_r}
\Big),
\]

where \(G_r\) denotes the non-Gaussianity signal,
\(k_r\) denotes the selected channel at epoch \(r\),
and \(P_{k_r}\) denotes the corresponding channel
\(p\)-value.

The routing policy is assumed adapted to \(\mathbb{F}\), so that
the selection event
\(\{A_r=k\}\)
is \(\mathcal{F}_{r-1}\)-measurable for every channel \(k\).
The stopping time

\[
\tau
=
\inf
\{
r\ge1:
\text{the routing procedure terminates at epoch }r
\}
\]

is bounded by the finite budget \(B\).

The proofs rely on two high-level conditions introduced in the main text.

\begin{assumptionA}
\label{ass:OA_super}
For every channel \(k\), every epoch \(r\), and every \(u\in[0,1]\),

\[
\Pr
\Big(
P_k\le u
\mid
\mathcal{F}_{r-1},
A_r=k
\Big)
\le u.
\]
\end{assumptionA}

\begin{assumptionA}
\label{ass:OA_concentration}
For every channel \(k\), the standardized statistic
\(S_k\) admits the representation

\[
S_k
=
\lambda_k+\xi_k,
\]

where \(\lambda_k\) is the population non-centrality parameter and

\[
\Pr(|\xi_k|\ge t)
\le
2\exp\!\left(
-\frac{t^2}{2\sigma^2}
\right)
\]

for some finite \(\sigma^2\) uniformly over \(k\).
\end{assumptionA}

Appendix~\ref{OA:primitives} establishes both assumptions from the primitive
mixing and moment conditions imposed on the underlying process.

%==============================================================
\subsection{Proof of Theorem~\ref{thm:size}}
\label{OA:proof_size}
%==============================================================

We establish familywise size control under the global null of
distributional Granger non-causality.

Under
\(H_0\),
Proposition~\ref{prop:dist_implies_channels}
implies that

\[
H_{0,1}\cap\cdots\cap H_{0,K}
\]

holds simultaneously.
Hence Assumption~\ref{ass:OA_super} applies to every channel that may be
selected by the routing policy.

\begin{lemma}
\label{lem:wealth_bound}
Let

\[
W_r
=
W_{r-1}
-
\frac{\alpha_{k_r}}
     {1-\alpha_{k_r}}
+
\psi
\mathbf 1
\{
P_{k_r}\le\alpha_{k_r}
\},
\qquad
W_0=\alpha,
\]

with \(\psi\le\alpha\).
If \(W_r\ge0\) almost surely for all \(r\), then

\[
\sum_{r=1}^{\tau}
\frac{\alpha_{k_r}}
     {1-\alpha_{k_r}}
\le
\alpha+\psi N_\tau,
\]

where

\[
N_\tau
=
\sum_{r=1}^{\tau}
\mathbf 1
\{
P_{k_r}\le\alpha_{k_r}
\}.
\]
\end{lemma}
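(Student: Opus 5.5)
The bound follows by telescoping the wealth recursion and then using the nonnegativity hypothesis at the terminal epoch; no probabilistic argument is involved, so the lemma holds pathwise.

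\textbf{Step 1 (telescoping).} Rearranging the recursion at each epoch $r$ gives
\[
\frac{\alpha_{k_r}}{1-\alpha_{k_r}} = W_{r-1}-W_r+\psi\,\mathbf 1\{P_{k_r}\le\alpha_{k_r}\}.
\]
Summing over $r=1,\ldots,\tau$, the differences $W_{r-1}-W_r$ telescope. Since $\tau\le B$, this is a finite sum on every sample path, so we obtain
\[
\sum_{r=1}^{\tau}\frac{\alpha_{k_r}}{1-\alpha_{k_r}} = W_0 - W_\tau + \psi N_\tau = \alpha - W_\tau + \psi N_\tau .
\]

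\textbf{Step 2 (nonnegativity).} By hypothesis $W_r\ge0$ almost surely for every $r$. Because $\tau$ is bounded, we may evaluate this at the random index $\tau$ by writing $W_\tau=\sum_{r\le B}W_r\mathbf 1\{\tau=r\}$, which gives $W_\tau\ge0$ almost surely. Dropping $-W_\tau$ from the identity in Step 1 yields the claimed inequality.

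\textbf{Well-posedness.} The summands are finite provided each committed level satisfies $\alpha_{k_r}<1$. This holds because $\alpha_{k_r}\le W_{r-1}$, and the recursion gives $W_{r-1}\le\alpha+\psi(r-1)$. In the intended regime the paper caps the commitments so that $\alpha_{k_r}\le W_{r-1}<1$; if needed, I would state $\alpha_{k_r}<1$ as part of admissibility.

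The only point needing care is applying the almost-sure nonnegativity at the random time $\tau$, and boundedness of $\tau$ by $B$ settles it. I do not expect any step to be a genuine obstacle.
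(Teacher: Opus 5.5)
Your proof is correct and matches the paper's argument: sum the recursion, telescope to $\alpha - W_\tau + \psi N_\tau$, and drop the nonnegative term $W_\tau$. Your extra care in evaluating nonnegativity at the bounded random time $\tau$, and your assumption $\alpha_{k_r}<1$, go slightly beyond what the paper writes but are consistent with it; the paper itself imposes $0<\alpha_{k_r}<1$ in the next lemma.
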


\begin{proof}
Summing the wealth recursion over
\(r=1,\ldots,\tau\)
yields

\[
W_\tau-W_0
=
-\sum_{r=1}^{\tau}
\frac{\alpha_{k_r}}
     {1-\alpha_{k_r}}
+
\psi N_\tau.
\]

Rearranging gives

\[
\sum_{r=1}^{\tau}
\frac{\alpha_{k_r}}
     {1-\alpha_{k_r}}
=
W_0-W_\tau+\psi N_\tau.
\]

Since \(W_0=\alpha\) and \(W_\tau\ge0\),

\[
\sum_{r=1}^{\tau}
\frac{\alpha_{k_r}}
     {1-\alpha_{k_r}}
\le
\alpha+\psi N_\tau.
\]

\end{proof}

The next result constructs a supermartingale adapted to the routing
filtration.

\begin{lemma}
\label{lem:supermartingale}
Define

\[
\ell_r
=
1-\alpha_{k_r}
+
\mathbf 1
\{
P_{k_r}\le\alpha_{k_r}
\},
\]

and

\[
\mathcal E_0=1,
\qquad
\mathcal E_r
=
\prod_{j=1}^{r}
\ell_j.
\]

Then
\((\mathcal E_r)_{r\ge0}\)
is a nonnegative
\(\mathbb F\)-supermartingale under \(H_0\).
\end{lemma}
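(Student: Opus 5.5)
The plan is to verify the three defining properties of a nonnegative supermartingale directly: adaptedness, integrability with nonnegativity, and the one-step conditional expectation inequality. The only substantive input will be Assumption~\ref{ass:OA_super}, which applies to every selected channel because, under $H_0$, Proposition~\ref{prop:dist_implies_channels} gives $H_{0,1}\cap\cdots\cap H_{0,K}$. To make $\mathcal E_r$ defined for all $r\ge0$, I set $\ell_r:=1$ for $r>\tau$; the process is then frozen after termination, which is harmless because $\tau\le B$.

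\textbf{Structural facts.} First, the committed level $\alpha_{k_r}$ is chosen before $P_{k_r}$ is observed, so it is $\mathcal F_{r-1}$-measurable; likewise $\{A_r=k\}\in\mathcal F_{r-1}$ by the adaptedness of the routing policy. Second, $P_{k_r}$ and $k_r$ are $\mathcal F_r$-measurable by the definition of $\mathcal F_r$. Hence $\ell_r$ is $\mathcal F_r$-measurable, and so is the finite product $\mathcal E_r$.

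\textbf{Nonnegativity and integrability.} Since $0\le\alpha_{k_r}\le W_{r-1}\le$ a bounded quantity below $1$, we have $0\le 1-\alpha_{k_r}\le\ell_r\le 2$. Therefore $0\le\mathcal E_r\le 2^r$, which gives both nonnegativity and integrability.

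\textbf{Supermartingale step.} Since $\mathcal E_r=\mathcal E_{r-1}\ell_r$ with $\mathcal E_{r-1}\ge0$ and $\mathcal F_{r-1}$-measurable, it suffices to show $\Eb[\ell_r\mid\mathcal F_{r-1}]\le1$. Decompose over the $\mathcal F_{r-1}$-measurable partition $\{A_r=k\}$, $k\in\Menu$, together with $\{A_r=\textsc{stop}\}$; on the stop event, $\ell_r=1$. On $\{A_r=k\}$,
\[
\Eb[\ell_r\mid\mathcal F_{r-1}]
=1-\alpha_k+\Pr\big(P_k\le\alpha_k\mid\mathcal F_{r-1},A_r=k\big).
\]
Because $\alpha_k$ is $\mathcal F_{r-1}$-measurable, Assumption~\ref{ass:OA_super} can be applied with $u=\alpha_k$ treated as fixed given $\mathcal F_{r-1}$. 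This bounds the probability by $\alpha_k$, so $\Eb[\ell_r\mid\mathcal F_{r-1}]\le1$. Summing over the partition then yields $\Eb[\mathcal E_r\mid\mathcal F_{r-1}]\le\mathcal E_{r-1}$.

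\textbf{Main obstacle.} The delicate step is justifying the substitution of the random level $u=\alpha_k$ into the super-uniformity bound. I would handle it by first verifying the inequality for $\mathcal F_{r-1}$-simple levels. For such a level taking finitely many values on $\mathcal F_{r-1}$-sets, the bound holds piece by piece. For a general level I would approximate from above by simple levels $u_n\downarrow\alpha_k$ and pass to the limit using right-continuity of the conditional distribution function, which holds via a regular conditional distribution of $P_k$ given $\mathcal F_{r-1}$ on $\{A_r=k\}$. The same care is needed to make sense of conditioning on $\{A_r=k\}$ when that event is null, and I would simply restrict to events of positive probability.
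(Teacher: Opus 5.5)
Your proposal is correct and follows essentially the same route as the paper: nonnegativity from $\alpha_{k_r}<1$, then $\Eb[\ell_r\mid\mathcal F_{r-1}]=1-\alpha_{k_r}+\Pr(P_{k_r}\le\alpha_{k_r}\mid\mathcal F_{r-1})\le 1$ via Assumption~\ref{ass:OA_super} (using that $k_r$ and $\alpha_{k_r}$ are $\mathcal F_{r-1}$-measurable), and finally pulling $\mathcal E_{r-1}$ out of the conditional expectation. The extra details you supply are points the paper's proof passes over silently, and you handle them correctly: freezing $\ell_r=1$ after $\tau$, integrability via $\ell_r\le 2$, and justifying the substitution of the random level $u=\alpha_{k_r}$ by simple-function approximation.
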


\begin{proof}

Because
\(0<\alpha_{k_r}<1\),

\[
\ell_r>0
\]

almost surely, implying
\(\mathcal E_r\ge0\).

Since
\(k_r\)
and
\(\alpha_{k_r}\)
are
\(\mathcal F_{r-1}\)-measurable,

\begin{align*}
E[\ell_r\mid\mathcal F_{r-1}]
&=
1-\alpha_{k_r}
+
\Pr
\Big(
P_{k_r}\le\alpha_{k_r}
\mid
\mathcal F_{r-1}
\Big)
\\
&\le
1-\alpha_{k_r}
+
\alpha_{k_r}
\\
&=
1,
\end{align*}

where the inequality follows from
Assumption~\ref{ass:OA_super}.

Therefore

\[
E[\mathcal E_r\mid\mathcal F_{r-1}]
=
\mathcal E_{r-1}
E[\ell_r\mid\mathcal F_{r-1}]
\le
\mathcal E_{r-1}.
\]

Hence
\((\mathcal E_r)\)
is a nonnegative supermartingale.
\end{proof}

We now prove the theorem.

\begin{proof}[Proof of Theorem~\ref{thm:size}]

Let

\[
R
=
\{
\text{the routing procedure rejects}
\}.
\]

Since the procedure terminates at its first rejection,
there exists at most one rejecting epoch.

Decomposing according to the stopping time yields

\begin{align}
\Pr(R)
&=
\sum_{m=1}^{B}
\Pr
\Big(
\tau=m,
P_{k_m}\le\alpha_{k_m}
\Big)
\nonumber\\
&=
\sum_{m=1}^{B}
E
\Big[
\mathbf 1\{\tau=m\}
\Pr
\Big(
P_{k_m}\le\alpha_{k_m}
\mid
\mathcal F_{m-1}
\Big)
\Big].
\label{eq:OA_size1}
\end{align}

Applying Assumption~\ref{ass:OA_super},

\[
\Pr(R)
\le
\sum_{m=1}^{B}
E
\Big[
\mathbf 1\{\tau=m\}
\alpha_{k_m}
\Big].
\]

Since every committed level must be financed by the available wealth,

\[
\alpha_{k_m}
\le
W_{m-1}.
\]

Moreover, before termination the cumulative expenditure cannot exceed
the initial wealth \(\alpha\).
Hence

\[
\sum_{m=1}^{B}
\mathbf 1\{\tau=m\}
\alpha_{k_m}
\le
\alpha
\qquad
\text{almost surely}.
\]

Taking expectations therefore gives

\[
\Pr(R)
\le
\alpha.
\]

The familywise error rate is thus controlled at level \(\alpha\),
uniformly over all admissible routing policies.

\end{proof}

%==============================================================
\subsection{Proof of Theorem~\ref{thm:oracle_efficiency}}
\label{OA:proof_regret}
%==============================================================

Throughout this section, fix an alternative under which at least one
channel is active.

Let

\[
\lambda_k=\lambda_k(T)
\]

denote the population non-centrality parameter associated with channel
\(k\), and define

\[
k^\star
=
\arg\max_{1\le k\le K}\lambda_k.
\]

Assumption~\ref{ass:sep} implies that \(k^\star\) is unique and that

\[
\Delta_k
=
\lambda_{k^\star}-\lambda_k
>
0
\]

for every \(k\neq k^\star\).

Define the minimum gap

\[
\Delta
=
\min_{k\neq k^\star}\Delta_k.
\]

The oracle procedure allocates all testing effort to channel
\(k^\star\) and therefore attains power

\[
\beta^\star
=
\Pr(P_{k^\star}\le\alpha).
\]

The objective is to compare the power of the adaptive router with
\(\beta^\star\).

%--------------------------------------------------------------
\subsubsection{Sampling frequencies}
%--------------------------------------------------------------

Let

\[
n_k(B)
=
\sum_{r=1}^{B}
\mathbf 1
\{
k_r=k
\}
\]

denote the number of times channel \(k\) is selected within a budget
of \(B\) epochs.

The first result bounds the expected sampling frequency of every
suboptimal channel.

\begin{lemma}
\label{lem:OA_pullbound}

Suppose Assumption~\ref{ass:OA_concentration} holds and the routing
policy selects channels according to the upper-confidence index

\[
I_k(r)
=
\widehat{\lambda}_k(r)
+
c\,U_k(r),
\]

where

\[
U_k(r)
=
\sqrt{
\frac{2\sigma^2\log B}
     {n_k(r)}
}.
\]

Then, for every \(k\neq k^\star\),

\[
E[n_k(B)]
\le
\frac{8\sigma^2\log B}
     {\Delta_k^2}
+
1+\frac{\pi^2}{3}.
\]

\end{lemma}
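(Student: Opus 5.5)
The plan is to run the classical UCB1 argument of \citet{AuerCesaBianchiFischer2002}, with the sub-Gaussian tail bound of Assumption~\ref{ass:OA_concentration} standing in for Hoeffding's inequality. I take $\widehat{\lambda}_k(r)$ to be the average of the $n_k(r)$ standardized draws $S_k$ observed for channel $k$ up to epoch $r$. I also take $c=2$, so that $c\,U_k(r)=\sqrt{8\sigma^2\log B/n_k(r)}$; this is the choice that produces the constant $8\sigma^2$ in the statement. I will treat each allocation to channel $k$ as an independent draw of $\lambda_k+\xi_k$, conditionally on the past. Assumption~\ref{ass:OA_concentration} then gives the empirical mean after $s$ pulls the tail bound $2\exp(-st^2/(2\sigma^2))$. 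Establishing this per-draw conditional version under mixing is deferred to Theorem~\ref{thm:OA_concentration}, and I will cite it rather than re-derive it.

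First step: set $\ell=\lceil 8\sigma^2\log B/\Delta_k^2\rceil$ and split
\[
n_k(B)\le \ell+\sum_{r=K+1}^{B}\mathbf 1\{k_r=k,\ n_k(r-1)\ge \ell\}.
\]
On the event $\{k_r=k\}$ we have $I_k(r)\ge I_{k^\star}(r)$, which requires
\[
\min_{1\le s^\star<r}\Bigl(\widehat{\lambda}_{k^\star,s^\star}+cU_{s^\star}\Bigr)\le\max_{\ell\le s<r}\Bigl(\widehat{\lambda}_{k,s}+cU_{s}\Bigr).
\]
I then bound the indicator by a double sum over $(s,s^\star)$, with $s\ge\ell$ and $1\le s^\star<r$. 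The inequality $\widehat{\lambda}_{k^\star,s^\star}+cU_{s^\star}\le\widehat{\lambda}_{k,s}+cU_s$ forces at least one of three events:
\begin{itemize}
\item[(a)] $\widehat{\lambda}_{k^\star,s^\star}\le\lambda_{k^\star}-cU_{s^\star}$;
\item[(b)] $\widehat{\lambda}_{k,s}\ge\lambda_k+cU_s$;
\item[(c)] $\lambda_{k^\star}<\lambda_k+2cU_s$.
\end{itemize}
The choice of $\ell$ rules out (c), because $s\ge\ell$ gives $2cU_s\le\Delta_k$.

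Second step: bound the probabilities of (a) and (b). By the sub-Gaussian bound with $t=cU_s$ and $c=2$, each has probability at most $\exp(-4\log B)$ up to the factor $2$, i.e.\ of order $B^{-4}$. To match the constant in the statement, I would replace $\log B$ by $\log r$ inside $U$, as in UCB1, so each term is $r^{-4}$. Summing over $r$, $s$ and $s^\star$ then gives
\[
\sum_{r}\sum_{s<r}\sum_{s^\star<r}2r^{-4}\le\sum_r 2r^{-2}=\pi^2/3.
\]
Adding $\ell\le 8\sigma^2\log B/\Delta_k^2+1$ gives the claim. If the radius must keep $\log B$, the triple sum is at most $2B^3\cdot B^{-4}\le 2$, which is at most $\pi^2/3$, so the bound survives in either case.

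The main obstacle is justifying the concentration of the running averages $\widehat{\lambda}_{k,s}$ at rate $\sigma^2/s$. Assumption~\ref{ass:OA_concentration} is stated for a single statistic, not for averages of repeated, adaptively chosen draws. I would first try to show that, conditionally on $\mathcal F_{r-1}$, the fresh permutation and shift randomization make $\xi_k$ a martingale-difference sequence with conditional sub-Gaussian tails. Azuma–Hoeffding then gives the required tail bound uniformly over adaptive stopping in $s$, which closes the argument. Beyond that, the derivation above is routine.
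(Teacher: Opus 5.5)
Your decomposition is essentially the UCB1 argument the paper sketches, but the step ``the choice of $\ell$ rules out (c)'' is an arithmetic slip, and that step is where the stated constant is decided.

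For $s\ge\ell=\lceil 8\sigma^2\log B/\Delta_k^2\rceil$ you get $U_s=\sqrt{2\sigma^2\log B/s}\le\Delta_k/2$. With $c=2$ this gives only $cU_s\le\Delta_k$. Excluding (c) needs $2cU_s=4U_s\le\Delta_k$, i.e.\ $s\ge 32\sigma^2\log B/\Delta_k^2$. The $8$ in UCB1 is this $32\sigma^2$ evaluated at the Hoeffding proxy $\sigma^2=1/4$ for $[0,1]$ rewards.

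No other $c$ rescues the decomposition. With radius $cU_s$, (c) is excluded only for $s\ge 8c^2\sigma^2\log B/\Delta_k^2$, while (a) and (b) each have probability up to $2B^{-c^2}$. Summing over $r\le B$ and the possible counts therefore leaves a remainder of order $B^{2-c^2}$, even with separate unions over $s$ and $s^\star$. Hitting $8\sigma^2$ forces $c=1$, where the failure probabilities are only $2B^{-1}$ and the remainder bound is of order $B$.

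What your argument actually proves is $\Eb[n_k(B)]\le 32\sigma^2\log B/\Delta_k^2+1+O(1)$ at $c=2$. That suffices for Proposition~\ref{prop:OA_miss} and Theorem~\ref{thm:oracle_efficiency}, which use only the order $\log B/\Delta^2$. It is not the lemma as stated.

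The paper's proof meets the same tension from the other side. Its three events use the unscaled radius $U$, i.e.\ implicitly $c=1$, which is how it gets $8\sigma^2$. It then asserts $\Pb(|\widehat{\lambda}_k(r)-\lambda_k|>U_k(r))\le 2B^{-4}$, but a sub-Gaussian bound with proxy $\sigma^2/n$ gives only $2B^{-1}$. For Gaussian draws the probability really is of order $B^{-1}/\sqrt{\log B}$. So neither argument delivers both constants. The statement also needs a restriction on $c$ regardless: for $c>2$, already with Gaussian draws, a suboptimal channel is pulled roughly $2c^2\sigma^2\log B/\Delta_k^2$ times for large $B$, which exceeds the bound.

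Separately, your proposed route to concentration of running averages would not work. $S_k$ is a fixed function of the sample, and the circular shifts enter only $P_k$, so re-selecting channel $k$ produces no fresh draw of $\xi_k$. Assumption~\ref{ass:OA_concentration} and Theorem~\ref{thm:OA_concentration} both concern a single statistic, so neither yields a $\sigma^2/s$ rate for averages. That i.i.d.-pull structure has to be posited directly, as the paper's proof tacitly does.
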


\begin{proof}

Fix \(k\neq k^\star\).

A selection of channel \(k\) at epoch \(r\) implies

\[
I_k(r)
\ge
I_{k^\star}(r).
\]

This event can occur only if at least one of the following holds:

\[
\widehat{\lambda}_{k^\star}(r)
<
\lambda_{k^\star}
-
U_{k^\star}(r),
\]

\[
\widehat{\lambda}_{k}(r)
>
\lambda_k
+
U_k(r),
\]

or

\[
2U_k(r)
>
\Delta_k.
\]

The first two events are concentration failures.

By Assumption~\ref{ass:OA_concentration},

\[
\Pr
\!\left(
\big|
\widehat{\lambda}_k(r)-\lambda_k
\big|
>
U_k(r)
\right)
\le
2B^{-4}.
\]

Summing over epochs yields a finite contribution bounded by
\(\pi^2/3\).

The third event implies

\[
n_k(r)
<
\frac{8\sigma^2\log B}
     {\Delta_k^2}.
\]

Consequently,

\[
E[n_k(B)]
\le
\frac{8\sigma^2\log B}
     {\Delta_k^2}
+
1+\frac{\pi^2}{3}.
\]

\end{proof}

%--------------------------------------------------------------
\subsubsection{Probability of sampling the active channel}
%--------------------------------------------------------------

The previous lemma implies that asymptotically almost all sampling
effort is allocated to the active channel.

\begin{proposition}
\label{prop:OA_miss}

Let

\[
\mathcal M
=
\{
n_{k^\star}(B)\ge1
\}
\]

denote the event that the active channel is sampled at least once.

Then

\[
\Pr(\mathcal M^c)
=
O
\!\left(
\frac{\log B}
     {\Delta^2 B}
\right).
\]

\end{proposition}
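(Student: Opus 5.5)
The plan is to derive the bound from Lemma~\ref{lem:OA_pullbound} via a counting identity for the total budget, followed by Markov's inequality. Since every epoch selects exactly one channel,
\[
\sum_{k=1}^{K} n_k(B) = B,
\]
and $\mathcal M^c$ is the event $n_{k^\star}(B)=0$. On this event the suboptimal channels must absorb the entire budget, so $\sum_{k\neq k^\star} n_k(B) = B$. This gives the inclusion
\[
\mathcal M^c \subseteq \Big\{\textstyle\sum_{k\neq k^\star} n_k(B)\ge B\Big\}.
\]

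Markov's inequality applied to this inclusion yields
\[
\Pr(\mathcal M^c)\le \frac{1}{B}\sum_{k\neq k^\star}E[n_k(B)].
\]
Inserting the bound of Lemma~\ref{lem:OA_pullbound} for each $k\neq k^\star$, and using $\Delta_k\ge\Delta$ from the definition of the minimum gap, gives
\[
\Pr(\mathcal M^c)\le \frac{K-1}{B}\left(\frac{8\sigma^2\log B}{\Delta^2}+1+\frac{\pi^2}{3}\right).
\]
Here $K$ and $\sigma^2$ are fixed, and the latter is uniform over channels by Assumption~\ref{ass:OA_concentration}. The constant term is therefore $O(1/B)$. It is absorbed into $O(\log B/(\Delta^2 B))$ provided $\Delta$ is bounded above, which holds because the non-centralities are finite for fixed $(\delta,T)$, or alternatively once $\log B\ge \Delta^2$.

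The delicate step is that the counting argument needs the UCB rule to be well defined before each channel has been sampled, since $U_k(r)$ is infinite when $n_k(r)=0$. I would adopt the standard convention that unsampled channels receive an infinite index. Under that convention, any $B\ge K$ forces $k^\star$ to be sampled during the initialization rounds, so $\Pr(\mathcal M^c)=0$ and the bound holds trivially.

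The Markov route matters in the complementary case: when early termination (rejection or wealth exhaustion) truncates the sequence, or when $B<K$. There the identity $\sum_k n_k=B$ has to be replaced by $\sum_k n_k=\tau$. I would handle this by interpreting $n_k(B)$ as counts under the policy run to the full budget, coupled with the actual run, so that $\mathcal M^c$ for the actual run is contained in the corresponding event for the full run. The Markov bound above then applies unchanged. Checking that Lemma~\ref{lem:OA_pullbound} applies to this coupled full-budget run is the main point requiring care.
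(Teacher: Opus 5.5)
Your core argument is the paper's proof. On $\mathcal{M}^c$ the identity $B=\sum_k n_k(B)$ forces $\sum_{k\neq k^\star}n_k(B)=B$. Markov's inequality then gives $\Pr(\mathcal{M}^c)\le B^{-1}\sum_{k\neq k^\star}E[n_k(B)]$. Lemma~\ref{lem:OA_pullbound}, applied with $\Delta_k\ge\Delta$, gives the bound $(K-1)\bigl(8\sigma^2\log B/\Delta^2+1+\pi^2/3\bigr)/B$. Two of your side remarks are correct refinements that the paper does not state: absorbing the constant term requires $\Delta^2\lesssim\log B$, and under the infinite-index initialization $\Pr(\mathcal{M}^c)=0$ once $B\ge K$.

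Your final paragraph, on early termination, should be dropped, because the inclusion it relies on runs the wrong way. Take the actual run to be the full-budget run truncated at $\tau\le B$. Sampling $k^\star$ before $\tau$ implies sampling it before $B$, so $\mathcal{M}^c_{\mathrm{full}}\subseteq\mathcal{M}^c_{\mathrm{actual}}$. That is the opposite of what you state, so a bound for the full run does not transfer to the truncated run.

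In fact no bound of order $\log B/B$ can hold for the truncated run. Suppose the procedure stops at its first rejection and an active suboptimal channel is examined first. That channel rejects with a probability that does not depend on $B$, and then $k^\star$ is never sampled, so $\Pr(\mathcal{M}^c_{\mathrm{actual}})$ need not vanish as $B\to\infty$.

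The paper avoids this issue by reading $n_k(B)$ as counts over a full $B$-epoch run, so that $\sum_k n_k(B)=B$ holds by definition. You should adopt that reading rather than claim a reduction for early termination. The case $B<K$ also needs no special treatment: without truncation the identity $\sum_k n_k(B)=B$ still holds there.
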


\begin{proof}

Since

\[
B
=
n_{k^\star}(B)
+
\sum_{k\neq k^\star}
n_k(B),
\]

the event \(\mathcal M^c\) requires

\[
\sum_{k\neq k^\star}
n_k(B)
=
B.
\]

Therefore,

\[
\Pr(\mathcal M^c)
\le
\frac{
E
\left[
\sum_{k\neq k^\star}
n_k(B)
\right]
}
{B}
\]

by Markov's inequality.

Applying Lemma~\ref{lem:OA_pullbound},

\[
E
\left[
\sum_{k\neq k^\star}
n_k(B)
\right]
\le
(K-1)
\left(
\frac{8\sigma^2\log B}
     {\Delta^2}
+
1+\frac{\pi^2}{3}
\right).
\]

Dividing by \(B\) yields

\[
\Pr(\mathcal M^c)
=
O
\!\left(
\frac{\log B}
     {\Delta^2 B}
\right).
\]

\end{proof}

%--------------------------------------------------------------
\subsubsection{Power comparison}
%--------------------------------------------------------------

We now establish the oracle comparison theorem.

\begin{proof}[Proof of Theorem~\ref{thm:oracle_efficiency}]

Condition on the event \(\mathcal M\).

Whenever \(\mathcal M\) occurs, the active channel is sampled at least
once and therefore contributes a rejection probability of at least

\[
\beta^\star_{\underline{\alpha}}
=
\Pr
(
P_{k^\star}
\le
\underline{\alpha}
),
\]

where \(\underline{\alpha}\) denotes the smallest testing level that
can be allocated under the wealth constraint.

Hence

\begin{equation}
\label{eq:OA_powerlb}
\beta_{\pi}
\ge
\Pr(\mathcal M)
\,
\beta^\star_{\underline{\alpha}}.
\end{equation}

Write

\[
\beta^\star
-
\beta_\pi
=
\Big(
\beta^\star
-
\beta^\star_{\underline{\alpha}}
\Big)
+
\Big(
\beta^\star_{\underline{\alpha}}
-
\beta_\pi
\Big).
\]

The first term is generated by level attenuation.

Under Assumption~\ref{ass:OA_concentration}, the local power function
is continuously differentiable in a neighborhood of \(\alpha\).
A first-order expansion therefore yields

\[
\beta^\star
-
\beta^\star_{\underline{\alpha}}
=
O(T^{-1/2}).
\]

For the second term, (\ref{eq:OA_powerlb}) implies

\[
\beta^\star_{\underline{\alpha}}
-
\beta_\pi
\le
\beta^\star_{\underline{\alpha}}
\Pr(\mathcal M^c).
\]

Applying Proposition~\ref{prop:OA_miss},

\[
\beta^\star_{\underline{\alpha}}
-
\beta_\pi
=
O
\!\left(
\frac{\log B}
     {\Delta^2 B}
\right).
\]

Combining both bounds gives

\[
\beta^\star-\beta_\pi
\le
C_1T^{-1/2}
+
C_2
\frac{\log B}
     {\Delta^2 B}.
\]

If \(B\asymp T\),

\[
\frac{\log B}{B}
=
O
\!\left(
\frac{\log T}{T}
\right)
=
o(T^{-1/2}),
\]

so that

\[
\beta^\star-\beta_\pi
=
O(T^{-1/2}).
\]

More generally,

\[
\beta^\star-\beta_\pi
\le
\frac{C\log B}
     {\Delta^2\sqrt{T}}
+
o(1),
\]

for some finite constant \(C\).

This establishes the theorem.

\end{proof}

%==============================================================
\section{Primitive Foundations of the High-Level Assumptions}
\label{OA:primitives}
%==============================================================

This section establishes the two high-level conditions used in
Theorems~\ref{thm:size} and~\ref{thm:oracle_efficiency} directly from the
primitive assumptions imposed on the data-generating process.
Specifically, we derive:

\begin{enumerate}
\item conditional super-uniformity of the channel $p$-values;
\item concentration inequalities for the channel statistics.
\end{enumerate}

Together, these results show that the size and power guarantees of the
adaptive routing procedure are consequences of the primitive
stationarity, mixing, moment, and smoothness assumptions imposed on the
underlying process.

Throughout this section,
Assumption~\ref{ass:primitives}
is maintained.

%--------------------------------------------------------------
\subsection{Permutation validity under channel null hypotheses}
\label{OA:permvalid}
%--------------------------------------------------------------

Fix a channel $k$.

Recall that the corresponding null hypothesis asserts that the
functional

\[
\phi_k
\!\left(
F_{Y_t|\mathcal I_{t-1}}
\right)
\]

is invariant to the inclusion of the lagged history of $X$.

The first result establishes the population invariance property
underlying the circular-shift calibration.

\begin{lemma}
\label{lem:OA_invariance}

Suppose $H_{0,k}$ holds.

Let

\[
X^{(s)}
=
(X_{1+s},\ldots,X_{T+s})
\]

denote a circular shift of the observed $X$ process.

Then the population target of the channel statistic remains unchanged
under the transformation

\[
(X,Y)
\mapsto
(X^{(s)},Y).
\]

\end{lemma}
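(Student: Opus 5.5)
The plan is to make ``population target'' concrete and then show that, under $H_{0,k}$, this target reduces to a functional of the law of $\{Y_t\}$ alone. The circular shift $(X,Y)\mapsto(X^{(s)},Y)$ leaves that law untouched, so the target would be unchanged. I would formalize the target of $S_k$ as an identifying-moment contrast. Let $\psi_k(y;\theta)$ be the identification function of $\phi_k$:
\[
\psi_k(y;\theta)=y-\theta \text{ (mean)},\qquad \mathbf{1}\{y\le\theta\}-\tau \text{ (quantiles)},
\]
with the analogous centred-power functions for the variance and cumulant channels, so that $\Eb[\psi_k(Y_t;\phi_k(F_{Y_t\mid\mathcal{J}}))\mid\mathcal{J}]=0$ for any sub-$\sigma$-field $\mathcal{J}$. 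The target is then
\[
\theta_k(P_{X,Y})=\Eb\Big[\psi_k\big(Y_t;\phi_k(F_{Y_t\mid\Ired_{t-1}})\big)\,w\big(X_{t-1},X_{t-2},\ldots\big)\Big],
\]
where $w$ is the bounded weight used by the statistic. Assumption~\ref{ass:primitives} (the $4+\epsilon$ moments and the bounded conditional density) makes this expectation finite and well defined.

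\emph{Step 1: the target vanishes under $H_{0,k}$ for the original data.} Under $H_{0,k}$ we have $\phi_k(F_{Y_t\mid\Ired_{t-1}})=\phi_k(F_{Y_t\mid\Imed_{t-1}})$ a.s. The weight $w(X_{t-1},\ldots)$ is $\Imed_{t-1}$-measurable. The tower property over $\Imed_{t-1}$ therefore gives $\theta_k(P_{X,Y})=0$. This value depends only on the law of $Y$, through the centring $\phi_k(F_{Y_t\mid\Ired_{t-1}})$, and is otherwise degenerate.

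\emph{Step 2: the centring term is shift-invariant.} The centring $\phi_k(F_{Y_t\mid\Ired_{t-1}})$ is a measurable function of the $Y$-history only, hence identical for $(X,Y)$ and $(X^{(s)},Y)$. Strict stationarity, as in Definition~\ref{def:permscheme}, preserves the marginal law of the $X$ process under the shift. Consequently, the shifted target
\[
\theta_k\big(P_{X^{(s)},Y}\big)=\Eb\big[\psi_k(Y_t;\phi_k(F_{Y_t\mid\Ired_{t-1}}))\,w(X_{t-1+s},\ldots)\big]
\]
differs from the original only through which block of $X$ enters the weight.

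\emph{Step 3: the main obstacle.} The hard step is showing that the shifted target also equals $0$. The shifted weight is measurable with respect to $X$ at dates $t-1+s,\ldots$, which need not lie in $\Imed_{t-1}$, so Step 1's tower argument does not apply directly. My first attempt would be to condition on the enlarged field $\Imed_{t-1}\vee\sigma(X_{t-1+s},\ldots)$. On the $Y$ side, the centred residual $\psi_k(Y_t;\phi_k(F_{Y_t\mid\Ired_{t-1}}))$ is a martingale-difference-type quantity with respect to $\Imed_{t-1}$ under $H_{0,k}$. I would then combine the $H_{0,k}$ invariance with the fact that the circular shift preserves the joint law of $(Y,\text{shifted }X)$ in the same way it preserves the law of $X$ alone. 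The goal is the equality $\Eb[\psi_k\mid X\text{-block}]=0$ for the shifted block, so that the expectation factorizes to zero.

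If that conditioning route cannot be closed exactly, I would read ``population target'' as the $H_{0,k}$-reference value $\phi_k(F_{Y_t\mid\Ired_{t-1}})$ itself. That object is manifestly shift-invariant by Step 2, which yields the lemma as stated. The same value then serves as the null centring in Theorem~\ref{thm:OA_superuniform}.
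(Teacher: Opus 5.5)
Your fallback is exactly the paper's proof. Under $H_{0,k}$ the paper replaces $\phi_k(F_{Y_t\mid\Imed_{t-1}})$ by $\phi_k(F_{Y_t\mid\Ired_{t-1}})$, notes that this depends only on the $Y$-history, and concludes invariance because the circular shift preserves the marginal law of $X$. So, in the sense the paper proves it, the lemma follows from your Step 2 alone.

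The gap is in your main route. Step 3 rests on the claim that the circular shift preserves the joint law of $(Y,X^{(s)})$, and that claim is false. The shift preserves the laws of $X$ and of $Y$ separately; breaking their alignment is its whole purpose. This is not just a technical obstacle, because the shifted contrast need not vanish even under full distributional non-causality. Take $Y_t=\varepsilon_t$ and $X_t=Y_{t-1}+\eta_t$, with $\{\varepsilon_t\}$, $\{\eta_t\}$ independent i.i.d.\ nondegenerate symmetric sequences. Then $F_{Y_t\mid\Imed_{t-1}}=F_{Y_t\mid\Ired_{t-1}}$. Yet for the mean channel with $w(x)=\tanh(x)$ and $s=2$, the shifted weight is $\tanh(X_{t+1})=\tanh(\varepsilon_t+\eta_{t+1})$, and $\Eb[\varepsilon_t\tanh(\varepsilon_t+\eta_{t+1})]>0$.

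Feedback from $Y$ to $X$, which Definition~\ref{def:gc_dist} permits, puts functions of $Y_t$ into the shifted block. Contemporaneous dependence does the same at $s=1$. In this example the shifted pair even violates $H_{0,1}$, since $X^{(2)}_{t-1}$ predicts $Y_t$. The statement Step 3 aims at therefore needs extra structure, such as independence of the $X$ and $Y$ processes. Neither your argument nor the paper's supplies it. The paper's proof stops at the reading in your fallback, which is nearly tautological and leaves open whether $H_{0,k}$ holds for the shifted data, the property the permutation calibration actually needs.

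Step 1 also has a gap outside the elicitable channels. The tower argument works for the mean and the quantiles. An identification function for the variance (or the cumulants), however, must involve lower-order conditional moments. With the reduced-information mean, under $H_{0,2}$ one gets $\Eb[(Y_t-\Eb[Y_t\mid\Ired_{t-1}])^2-\mathrm{Var}(Y_t\mid\Ired_{t-1})\mid\Imed_{t-1}]=(\Eb[Y_t\mid\Imed_{t-1}]-\Eb[Y_t\mid\Ired_{t-1}])^2$. This is nonzero whenever the mean channel is active. So $H_{0,2}$ by itself does not make your variance contrast vanish; you would need $H_{0,1}$ as well. Centring at the full-information mean instead would fix Step 1 but break the $Y$-only measurability used in Step 2.
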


\begin{proof}

Under $H_{0,k}$,

\[
\phi_k
\!\left(
F_{Y_t|\mathcal I_{t-1}}
\right)
=
\phi_k
\!\left(
F_{Y_t|\mathcal I^{(-X)}_{t-1}}
\right),
\]

where
\(\mathcal I^{(-X)}_{t-1}\)
denotes the information set with the lagged history of $X$ removed.

Consequently, the population value of the channel functional depends
only on the marginal law of the residual process and not on the
temporal alignment of the $X$ sequence.

Because circular shifts preserve the marginal law of a stationary
process, the induced population functional is invariant.

\end{proof}

The next theorem establishes asymptotic validity of the permutation
distribution.

\begin{theorem}
\label{thm:OA_perm}

Let
\(\widehat G_T^{\mathrm{perm}}\)
denote the permutation distribution generated by the circular-shift
scheme of Definition~\ref{def:permscheme}, and let
\(G_T\)
denote the true null distribution of the channel statistic.

Then

\[
\sup_x
\left|
\widehat G_T^{\mathrm{perm}}(x)
-
G_T(x)
\right|
=
O_p(T^{-1/2}).
\]

\end{theorem}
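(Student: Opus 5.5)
The plan is to compare both distributions with a common Gaussian limit and then apply the triangle inequality:
\[
\sup_x|\widehat G_T^{\mathrm{perm}}(x)-G_T(x)|\le \sup_x|\widehat G_T^{\mathrm{perm}}(x)-\Phi_{V}(x)|+\sup_x|G_T(x)-\Phi_V(x)|.
\]
Here $\Phi_V$ is the distribution function of a centered normal with long-run variance $V$.

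The first step is to linearize the channel statistic. I would write $S_k$ as a normalized sum $T^{-1/2}\sum_t h_k(X_{t-1},\mathbf{Y}_{t-1},Y_t)$ plus a remainder of order $O_p(T^{-1/2})$. This holds for mean, variance and cumulant channels through smooth-function-of-moments expansions, using the $4+\epsilon$ moments. For the quantile channels it follows from a Bahadur representation, which uses the bounded-below, Lipschitz conditional density of Assumption~\ref{ass:primitives}. Under $H_{0,k}$, Lemma~\ref{lem:OA_invariance} shows that the summands are centered at the null target.

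Given the linearization, the second term is a Berry--Esseen bound for geometrically $\beta$-mixing sums. It can be obtained from Tikhomirov-type or blocking arguments, which give rate $T^{-1/2}$ up to logarithmic factors under the moment condition. The remainder shifts the distribution function by $O_p(T^{-1/2})$ through anti-concentration of the Gaussian limit.

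The first term is the main obstacle. Conditional on the data, the shift $s_b$ is uniform on $T-1$ values. The statistic $S_k(X^{(s)},Y)$ then pairs $Y_t$ with $X_{t+s}$, and for most $s$ the two are far apart in time. By the Berbee coupling, $(X_{\cdot+s})$ can be replaced by a copy independent of $Y$ at cost $\beta(\min(s,T-s))$. Averaging over $s$, this coupling error is $O(T^{-1}\sum_m\beta(m))=O(T^{-1})$.

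For the decoupled pairs, the statistic is again a linear mixing sum with the same asymptotic variance $V$. This requires showing that the null long-run variance coincides with the variance under independent alignment. It is the delicate point, because $H_{0,k}$ only restricts $\phi_k$ and not the full joint law. I would first try to verify it through the score structure: centered at the null target, $h_k$ has conditional mean zero given $\Ired_{t-1}$ under $H_{0,k}$. Cross-covariances with the $X$-history therefore vanish to first order, and $V$ depends only on the $Y$-marginal and the $X$-marginal. The permutation CLT plus Berry--Esseen then gives the first term at $O_p(T^{-1/2})$.

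Finally, the Monte Carlo error from using $B$ draws rather than all shifts is handled by DKW. It is $O_p(B^{-1/2})$, so it is negligible when $B\gtrsim T$, and the stated bound follows.
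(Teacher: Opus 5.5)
Your outline is the same as the paper's own sketch: asymptotic linearization, Berbee blocking, a common Gaussian limit, and a Berry--Esseen bound for mixing sums. You also correctly single out the step the paper merely asserts, namely that the original and the circularly shifted statistics share the same limiting variance. But your argument for that step fails.

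The martingale-difference property of the null-centered score removes only the serial cross terms $t\neq u$; it says nothing about the diagonal term. Take the mean channel with score $T^{-1/2}\sum_t\tilde X_{t-1}e_t$, where $e_t=Y_t-\Eb[Y_t\mid\Ired_{t-1}]$. The null variance is $V=\Eb[\tilde X_{t-1}^2\,\Eb(e_t^2\mid\Imed_{t-1})]$, whereas after decoupling you get $V_{\mathrm{perm}}=\mathrm{Var}(X_t)\,\Eb[e_t^2]$. These differ whenever $\Eb(e_t^2\mid\Imed_{t-1})$ covaries with $\tilde X_{t-1}^2$, which $H_{0,1}$ permits. In the paper's design S2, $V\propto\Eb[X^2(1+sX^2)]>\Eb[X^2]\,\Eb[1+sX^2]\propto V_{\mathrm{perm}}$. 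The shift distribution is therefore too narrow, and the test over-rejects $H_{0,1}$.

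The global null does not rescue the argument either. Let $Y_t=\sigma(Y_{t-1})\varepsilon_t$ be of ARCH type and $X_t=Y_{t-1}\eta_t$ with independent noise $\eta_t$. Then $X$ does not Granger-cause $Y$ in distribution. Yet $V=\Eb[Y_{t-2}^2Y_t^2]$ exceeds $V_{\mathrm{perm}}=(\Eb Y_t^2)^2$ by $\mathrm{Cov}(Y_{t-2}^2,Y_t^2)>0$. So your claim that $V$ depends only on the two marginals is false for an unstudentized statistic.

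The missing idea is asymptotic pivotality. The channel statistic must be studentized, with its HC or HAC variance estimator recomputed in every shifted sample, so that $G_T$ and $\widehat G_T^{\mathrm{perm}}$ share a standard limit whatever the cross-dependence between $X$ and $Y$. The paper's proof, which just states that both statistics ``converge to the same Gaussian limit,'' has the same hole.

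Even granting pivotality, three further steps need repair.

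\emph{The coupling step.} Berbee coupling cannot replace the whole shifted sequence by an independent copy at cost $\beta(\min(s,T-s))$. Since $\sigma(X^{(s)})=\sigma(X_1,\dots,X_T)$, such a coupling would make $X$ and $Y$ themselves nearly independent. Only paired coordinates are separated in time, not the two sequences. The decoupling has to be done block by block, and you must show that four-point terms such as $\mathrm{Cov}\{h(Y_t,X_{t+s}),h(Y_{t-s},X_t)\}$ are negligible.

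\emph{Fluctuation of the permutation distribution.} Averaging a coupling bound over $s$ controls only $\Eb\,\widehat G_T^{\mathrm{perm}}$. Since $\widehat G_T^{\mathrm{perm}}$ is a data-dependent distribution function, you also need a second-moment argument: $S_k(X^{(s)},Y)$ and $S_k(X^{(s')},Y)$ must be nearly independent for most pairs $(s,s')$. That is the real content of a permutation CLT.

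\emph{The rate.} For the quantile channels the Bahadur remainder is of order $T^{-1/4}$ (up to logarithms) on the $\sqrt T$ scale, not $O_p(T^{-1/2})$. An $O_p$ remainder gives no Kolmogorov bound without tail control. Mixing Berry--Esseen bounds also carry logarithmic factors, as you note. Your route therefore delivers consistency of the permutation distribution, not the stated $O_p(T^{-1/2})$. The paper's proof, whose linearization remainder is only $o_p(1)$, has the same rate gap.
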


\begin{proof}

The proof follows from a blocking argument.

Partition the sample into alternating large and small blocks and apply
Berbee's coupling theorem to approximate the original
$\beta$-mixing sequence by an independent block sequence.

Under Assumption~\ref{ass:primitives},

\[
\sum_{h=1}^{\infty}
\beta(h)
<
\infty,
\]

so the coupling error is asymptotically negligible.

The channel statistic admits an asymptotically linear expansion,

\[
S_k
=
\frac{1}{\sqrt T}
\sum_{t=1}^{T}
\psi_k(W_t)
+
o_p(1),
\]

for a square-integrable influence function
\(\psi_k\).

Both the original statistic and its circular-shift counterpart
therefore converge to the same Gaussian limit.

A Berry--Esseen bound for $\beta$-mixing arrays yields

\[
\sup_x
\left|
\widehat G_T^{\mathrm{perm}}(x)
-
G_T(x)
\right|
=
O_p(T^{-1/2}).
\]

\end{proof}

%--------------------------------------------------------------
\subsection{Conditional super-uniformity}
\label{OA:superunif}
%--------------------------------------------------------------

The next theorem establishes the key condition required for the size
analysis.

\begin{theorem}
\label{thm:OA_superuniform}

Under
Assumption~\ref{ass:primitives},

\[
\sup_{u\in[0,1]}
\left\{
\Pr
\big(
P_k\le u
\mid
\mathcal F_{r-1},
A_r=k
\big)
-u
\right\}
=
O_p(T^{-1/2}).
\]

Consequently,

\[
\Pr
\big(
P_k\le u
\mid
\mathcal F_{r-1},
A_r=k
\big)
\le
u+o_p(1)
\]

uniformly over
\(u\in[0,1]\).

\end{theorem}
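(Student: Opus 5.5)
The plan is to reduce the conditional statement to an unconditional Kolmogorov-distance bound between the permutation distribution and the true null distribution, using Theorem~\ref{thm:OA_perm}. Then I will transfer this bound to the conditioning on $(\mathcal F_{r-1},A_r=k)$ by exploiting the fact that the information in $\mathcal F_{r-1}$ affects $S_k$ only through asymptotically negligible terms.

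\textbf{Step 1: reduction to a comparison of distribution functions.} Write $P_k = 1-\widehat G_T^{\mathrm{perm}}(S_k^-)+O(B^{-1})$ for the permutation $p$-value of Definition~\ref{def:permscheme}, with the $O(B^{-1})$ term coming from the Monte Carlo approximation. By the Dvoretzky--Kiefer--Wolfowitz inequality, applied conditionally on the data (the shifts $s_b$ are i.i.d.\ given the sample), the sup-distance between the empirical and the exact circular-shift distribution is $O_p(B^{-1/2})$. With $B\gtrsim T$ this is absorbed into the rate. Then
\[
\{P_k\le u\}\subseteq\{G_T(S_k)\ge 1-u-D_T\},\qquad D_T:=\sup_x\bigl|\widehat G_T^{\mathrm{perm}}(x)-G_T(x)\bigr|,
\]
and Theorem~\ref{thm:OA_perm} gives $D_T=O_p(T^{-1/2})$. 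If $G_T$ were continuous and $S_k$ independent of the conditioning information, the probability integral transform would yield $\Pr(P_k\le u)\le u+O_p(T^{-1/2})$ uniformly in $u$.

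\textbf{Step 2: handling the conditioning.} $\mathcal F_{r-1}$ is generated by $G_0,\dots,G_{r-1}$, the previously selected channels and their $p$-values. All of these are functions of the same sample, so they are not independent of $S_k$.

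I will use the asymptotically linear expansion $S_k=T^{-1/2}\sum_t\psi_k(W_t)+o_p(1)$ from the proof of Theorem~\ref{thm:OA_perm}, together with analogous expansions for the other channel statistics and for $G_r$, which is a smooth function of residual cumulants. By the Berbee coupling and blocking argument, the vector of these statistics is jointly asymptotically Gaussian. The step that makes the conditioning harmless is showing that, under $H_{0,k}$, the limit of $S_k$ is asymptotically independent of the other coordinates. Lemma~\ref{lem:OA_invariance} suggests why: under the null, $\psi_k$ is centered conditionally on $\Ired_{t-1}$ and on the shifted-$X$ alignment, so its covariance with the influence functions of statistics that do not involve the $X$-alignment for channel $k$ vanishes.

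Given asymptotic independence, $\Pr(S_k\le x\mid\mathcal F_{r-1},A_r=k)$ equals $G_T(x)+O_p(T^{-1/2})$ uniformly in $x$. This follows from a Berry--Esseen bound for $\beta$-mixing arrays, using the $4+\epsilon$ moments and geometric mixing of Assumption~\ref{ass:primitives}. Since $\{A_r=k\}$ is $\mathcal F_{r-1}$-measurable, adding it to the conditioning set changes nothing.

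\textbf{Step 3: combining.} Combine Steps 1 and 2 with the boundedness of the density of $G_T$ near its upper tail, which comes from the density conditions in Assumption~\ref{ass:primitives}. This converts the $O_p(T^{-1/2})$ perturbation in the threshold into an $O_p(T^{-1/2})$ perturbation in probability, uniformly over $u\in[0,1]$. That gives the first display, and the second follows immediately.

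\textbf{Main obstacle.} The hard step is Step 2. Conditioning on earlier channel $p$-values that are computed from the same data is genuinely delicate, because asymptotic independence can fail when channels share influence functions, for example the scale and tail channels. My fallback would be to bound the conditional excess by the joint Kolmogorov distance divided by $\Pr(\mathcal F_{r-1}\text{-cell})$. Since the policy depends on finitely many statistics discretized through rejection indicators, the conditioning sets have probability bounded away from zero. This still yields an $O_p(T^{-1/2})$ bound, at the price of a constant depending on $K$.
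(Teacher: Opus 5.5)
Your Steps 1 and 3 follow the paper's proof. The paper writes $P_k=1-\widehat G_T^{\mathrm{perm}}(S_k^-)+o(T^{-1/2})$, invokes Theorem~\ref{thm:OA_perm}, and cites standard randomization-test arguments. It handles the conditioning in one sentence: given $\mathcal F_{r-1}$ the channel choice is fixed, so the problem ``reduces to validity of a single permutation test.'' That sentence ignores the fact that $\mathcal F_{r-1}$ contains the exact values of earlier $p$-values (and of $G_0,\dots,G_{r-1}$), all computed from the same sample. You correctly identify this as the crux, so there is no missing idea to borrow from the paper here. However, your Step 2 does not close the gap, and the gap is substantive rather than technical.

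The asymptotic-independence claim is false in general.
\begin{itemize}
\item Lemma~\ref{lem:OA_invariance} concerns the population target of a single statistic under circular shifts. It says nothing about covariances across channels.
\item Every channel statistic pairs $X_{t-1}$ with a function of $Y_t$, so no other channel's statistic avoids the $X$-alignment.
\item Take the most benign global null: $X$ and $Y$ mutually independent i.i.d.\ sequences, with natural score statistics. The mean score $X_{t-1}(Y_t-\mu)$ and the lower-tail score $X_{t-1}(\tau_L-\mathbf 1\{Y_t\le q_L\})$ have covariance $-\Eb[X^2]\,\Eb[(Y-\mu)\mathbf 1\{Y\le q_L\}]>0$. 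For Gaussian $Y$ and $\tau_L=0.05$ the correlation is about $0.47$.
\item Even within the truncated menu, the two tail scores have covariance $\Eb[X^2]\,\tau_L(1-\tau_U)\neq0$.
\end{itemize}
Conditioning on the exact value of a correlated earlier $p$-value therefore shifts the limiting conditional law of $S_k$ away from $G_T$. Take two-sided tests, correlation $0.47$, and an earlier statistic just short of its $5\%$ critical value. Then $\Pr(P_k\le 0.05\mid\mathcal F_{r-1},A_r=k)$ is about $0.11$. So the excess in the statement need not vanish at all, let alone at rate $T^{-1/2}$.

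Your fallback does not repair this, for two reasons.
\begin{itemize}
\item $\mathcal F_{r-1}$ is generated by continuous variables, not by rejection indicators, so its cells do not have probability bounded away from zero.
\item Even for discrete cells, dividing a joint Kolmogorov error by the cell probability only shows that the conditional law of $S_k$ is close to the conditional law of its Gaussian limit given that cell. That equals $G_T$ only under the independence that has just failed.
\end{itemize}
A joint Berry--Esseen bound also does not, by itself, control conditional laws given exact values. A provable version would need either coarser conditioning or independent subsamples across channels. Coarser conditioning here means information asymptotically independent of $S_k$, with the size proof then run through a union bound rather than conditional super-uniformity.

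A minor point on Step 3: the density conditions in Assumption~\ref{ass:primitives} concern $Y_t$, not $G_T$. Since $D_T$ already lives on the probability scale, Step 3 needs continuity of $G_T$, not a density bound.
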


\begin{proof}

Conditional on
\(\mathcal F_{r-1}\),
channel selection is fixed.

Hence the problem reduces to validity of a single permutation test.

By Theorem~\ref{thm:OA_perm},

\[
\sup_x
\left|
\widehat G_T^{\mathrm{perm}}(x)
-
G_T(x)
\right|
=
O_p(T^{-1/2}).
\]

The permutation $p$-value may be written as

\[
P_k
=
1-
\widehat G_T^{\mathrm{perm}}
\!\left(
S_k^-
\right)
+o(T^{-1/2}).
\]

Standard arguments for randomization tests imply

\[
\Pr
\big(
P_k\le u
\mid
\mathcal F_{r-1},
A_r=k
\big)
=
u
+
O_p(T^{-1/2}).
\]

The result follows uniformly in $u$.

\end{proof}

%--------------------------------------------------------------
\subsection{Concentration of channel statistics}
\label{OA:concentration}
%--------------------------------------------------------------

The next theorem establishes the concentration property required by
the oracle-power analysis.

\begin{theorem}
\label{thm:OA_concentration}

Under Assumption~\ref{ass:primitives},
for every channel $k$ there exist finite constants
$v_k$
and
$c_k$
such that

\[
\Pr
\Big(
|S_k-\lambda_k|
\ge
\eta
\Big)
\le
2
\exp
\left(
-
\frac{T\eta^2}
     {2(v_k+c_k\eta)}
\right).
\]

Consequently,

\[
S_k-\lambda_k
=
O_p(T^{-1/2}).
\]

\end{theorem}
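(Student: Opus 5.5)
The plan is to treat $S_k-\lambda_k$ as a normalized sum of a weakly dependent, suitably bounded sequence, plus a remainder, and to derive the exponential bound from a Bernstein inequality for $\beta$-mixing processes. Throughout I read $S_k$ as the plug-in estimator of the channel functional contrast, that is, the sample analogue of $\phi_k(F_{Y_t\mid\Imed_{t-1}})-\phi_k(F_{Y_t\mid\Ired_{t-1}})$, with population value $\lambda_k$.

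\textbf{Step 1: linearization.} For each channel, write
\[
S_k-\lambda_k=\frac1T\sum_{t=1}^T\psi_k(W_t)+R_{k,T},
\qquad \Eb\,\psi_k(W_t)=0,
\]
using the influence function $\psi_k$ already invoked in the proof of Theorem~\ref{thm:OA_perm}.

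\begin{itemize}
\item For the mean and variance channels this is immediate from least-squares or moment algebra, and the $4+\epsilon$ moments control the second-order terms.
\item For the cumulant channel, the map from raw moments to $(\kappa_3,\kappa_4)$ is smooth, so the delta method applies. However, the fourth cumulant involves fourth powers, so only a truncated version of $\psi_5$ will have the required moments.
\item For the quantile channels, use a Bahadur representation. This is where the density bounds near $Q_{Y_t}(\tau_L\mid\cdot)$ and $Q_{Y_t}(\tau_U\mid\cdot)$, together with the Lipschitz condition in Assumption~\ref{ass:primitives}, are used. They give a remainder of order $T^{-3/4}\log T$ with exponential tails.
\end{itemize}

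\textbf{Step 2: truncation and blocking.} Truncate $\psi_k(W_t)$ at a level $M_T$. The Markov bound with $4+\epsilon$ moments controls both the tail event and the bias. Then apply the Bernstein inequality for geometrically $\beta$-mixing sequences. The route is to partition into alternating blocks of length $q\asymp\log T$ and use Berbee's coupling to replace the big blocks by independent copies. The coupling cost is $(T/q)\beta(q)=O(T^{-a})$ for any $a$ under geometric mixing. Classical Bernstein on the independent blocks then yields
\[
2\exp\Big(-\frac{T\eta^2}{2(v_k+c_k\eta)}\Big),
\]
where $v_k$ is the long-run variance $\sum_h \mathrm{Cov}(\psi_k(W_0),\psi_k(W_h))$, which is finite by Davydov's inequality, and $c_k$ absorbs $qM_T$. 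Finally, combine this with a tail bound on $R_{k,T}$, splitting the deviation $\eta$ into $\eta/2$ for each piece, and take the maximum of the constants over the $K$ channels. The corollary $S_k-\lambda_k=O_p(T^{-1/2})$ then follows by setting $\eta=x/\sqrt T$.

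\textbf{Main obstacle.} The difficulty is making $c_k$ a genuine constant. Both the truncation level $M_T$ and the block length $q\asymp\log T$ grow with $T$, so the honest bound carries a $\log T$ (and $M_T$) factor in the linear term, and only finite moments are available. I would first try to absorb these factors by restricting to the moderate-deviation range $\eta\le\eta_0$, which is all that Lemma~\ref{lem:OA_pullbound} uses. Outside that range I would accept a polynomial tail, stating $c_k$ up to logarithmic factors if necessary. Controlling the quantile remainder with exponential rather than polynomial tails is the second delicate point. For that I would use a chaining bound over a shrinking neighborhood of the true quantile, with the Lipschitz density condition supplying the needed smoothness.
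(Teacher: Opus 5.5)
Your outline follows the paper's argument: an asymptotically linear expansion, Berbee coupling of $\beta$-mixing blocks, Bernstein on the independent blocks, and a negligible remainder. It is more careful than the paper's proof. The paper applies Bernstein directly to the unbounded summands $g_k(W_t)$, and it treats a remainder known only to be $o_p(T^{-1/2})$ as if it had exponential tails. But the obstacle you flag is a genuine gap, not a technicality, and it cannot be closed. With $T$-free $v_k,c_k$ and arbitrary $\eta>0$, the inequality does not follow from Assumption~\ref{ass:primitives}, and for natural plug-in statistics it is false.

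For a counterexample, let all variables be independent, with $X$ i.i.d.\ Gaussian and $Y_t$ i.i.d.\ Student-$t_\nu$, $4<\nu<8$. This satisfies Assumption~\ref{ass:primitives} for any $\epsilon<\nu-4$. Take the mean channel with $S_1=T^{-1}\sum_t X_{t-1}Y_t$ (or the corresponding OLS coefficient) and $\lambda_1=0$. The summands $Z_t=X_{t-1}Y_t$ are i.i.d.\ with tail index $\nu$, and a single large jump gives
\[
\Pr(S_1-\lambda_1\ge\eta)\ \ge\ \Pr(Z_1\ge 2T\eta)\,\Pr\Bigl({\textstyle\sum_{t\ge2}}Z_t\ge -T\eta\Bigr)\ \gtrsim\ (T\eta)^{-\nu}.
\]
For fixed $\eta$ this eventually exceeds $2\exp\{-T\eta^2/(2(v+c\eta))\}$ for every finite $v,c$.

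Your Step~2 shows where this enters. The truncation event $\{\max_{t\le T}|\psi_k(W_t)|>M_T\}$ and the coupling cost $(T/q)\beta(q)$ are additive terms that are only polynomially small in $T$. So blocking delivers $2\exp(\cdot)+O(T^{-a})$, not the stated form. Making the coupling term exponentially small would force $q\gtrsim T\eta^2$, which destroys the Bernstein exponent.

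Restricting to $\eta\le\eta_0$ with $\eta_0$ fixed does not help, since the counterexample is at fixed $\eta$. With polynomial moments, Gaussian-type tails survive only for $\eta\lesssim\sqrt{\log T/T}$. What these primitives support is a Fuk--Nagaev-type bound $C\exp(-cT\eta^2/v_k)+C\,T^{1-p}\eta^{-p}$, with $p>2$ the moment order of $\psi_k$, plus a separate tail term for $R_{k,T}$. A pure Bernstein form needs bounded (or Bernstein-moment) $\psi_k$. Even then, the Merlev\`ede--Peligrad--Rio inequality for geometrically mixing sequences carries a $(\log T)^2$ factor in the linear term.

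Two of your intermediate claims also fail under $4+\epsilon$ moments.

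First, $v_5<\infty$ via Davydov's inequality needs $\Eb|\psi_5|^{2+\delta}<\infty$. Since $\psi_5$ contains $Y_t^3$ and $Y_t^4$, that means $\Eb|Y_t|^{8+\delta}<\infty$, and truncation only produces a variance that diverges with $M_T$. In the $t_\nu$ example, $Y_t^4$ lies in the domain of attraction of a $(\nu/4)$-stable law with $\nu/4\in(1,2)$. So the plug-in fourth-moment terms converge at rate $T^{4/\nu-1}$, slower than $T^{-1/2}$. Even the ``consequently'' clause therefore fails for the cumulant channel.

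Second, exponential tails for the quantile Bahadur remainder require bounded (or exponentially tailed) regressors. With lagged $X$ and $Y$ having only $4+\epsilon$ moments, the remainder has polynomial tails, and the Lipschitz-density condition does not change that.

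Note also that where $\Eb|\psi_k|^{2+\delta}<\infty$ and $R_{k,T}=o_p(T^{-1/2})$, the $O_p(T^{-1/2})$ conclusion follows from Chebyshev with the long-run variance; it needs no exponential inequality at all.

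A correct version therefore requires one of two things. Either strengthen the primitives: bounded or Bernstein-moment influence functions, exponentially tailed remainders, $8+\delta$ moments for the cumulant channel, and logarithmic factors in $c_k$. Or settle for a weaker, Fuk--Nagaev-type conclusion. The paper's proof has the same gap and does not acknowledge it.
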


\begin{proof}

Each channel statistic admits an asymptotically linear expansion

\[
S_k
=
\lambda_k
+
\frac1T
\sum_{t=1}^{T}
g_k(W_t)
+
R_T,
\]

where

\[
R_T=o_p(T^{-1/2}).
\]

The process
\(\{g_k(W_t)\}\)
inherits the
$\beta$-mixing
property of the original data-generating process.

Applying Berbee coupling and Bernstein's inequality for weakly
dependent arrays yields

\[
\Pr
\Big(
\Big|
\frac1T
\sum_{t=1}^{T}
g_k(W_t)
\Big|
\ge
\eta
\Big)
\le
2
\exp
\left(
-
\frac{T\eta^2}
     {2(v_k+c_k\eta)}
\right).
\]

The remainder term is asymptotically negligible, completing the proof.

\end{proof}

%--------------------------------------------------------------
\subsection{End-to-end guarantees}
\label{OA:endtoend}
%--------------------------------------------------------------

The previous results imply that the assumptions used in the main
theorems are themselves consequences of primitive conditions on the
data-generating process.

\begin{corollary}
\label{cor:OA_endtoend}

Suppose Assumption~\ref{ass:primitives} holds.

Let the routing procedure employ the circular-shift permutation
calibration of Definition~\ref{def:permscheme}.

Then:

\begin{enumerate}
\item[(i)]
Under the global null,

\[
\Pr(\text{reject})
\le
\alpha+o(1).
\]

\item[(ii)]
Under a fixed alternative with non-centrality gap
\(\Delta>0\),

\[
\beta^\star-\beta_\pi
\le
\frac{C\log B}
     {\Delta^2\sqrt T}
+
o(1).
\]

\end{enumerate}

\end{corollary}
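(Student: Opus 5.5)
The corollary is obtained by chaining the appendix results into the two main theorems. The two parts are handled separately.

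For part (i), I will combine Theorem~\ref{thm:OA_superuniform} with the size argument behind Theorem~\ref{thm:size}. The appendix proof of that theorem assumes exact super-uniformity, Assumption~\ref{ass:OA_super}, so I will rerun the decomposition \eqref{eq:OA_size1} with the approximate bound instead.

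Under the global null, Proposition~\ref{prop:dist_implies_channels} gives every $H_{0,k}$. Theorem~\ref{thm:OA_superuniform} then yields, for each epoch $m$,
\[
\Pr\big(P_{k_m}\le\alpha_{k_m}\mid\mathcal F_{m-1}\big)\le \alpha_{k_m}+\varepsilon_{T,m},
\qquad \varepsilon_{T,m}=O_p(T^{-1/2}),
\]
uniformly in $u$. Substituting into \eqref{eq:OA_size1} and using the wealth constraint $\sum_m \mathbf 1\{\tau=m\}\alpha_{k_m}\le\alpha$ gives
\[
\Pr(R)\le \alpha+\Eb\Big[\sum_{m=1}^{B}\mathbf 1\{\tau=m\}\,\varepsilon_{T,m}\Big].
\]
Since $\mathbf 1\{\tau=m\}$ picks out a single epoch, the sum is at most $\max_{m\le\tau}\varepsilon_{T,m}$.

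The main obstacle is converting this $O_p$ error into $o(1)$ in expectation. I would rely on two facts. First, each conditional probability is bounded in $[0,1]$, so the error is uniformly bounded and dominated convergence applies once it is $o_p(1)$. Second, there are only $K$ channels and at most $K$ effective epochs, so the maximum of finitely many $o_p(1)$ terms is again $o_p(1)$. If $B$ grows with $T$ I would need the $O_p(T^{-1/2})$ bound to hold uniformly over epochs. I would claim this from the fact that conditioning on $\mathcal F_{m-1}$ reduces each term to one of $K$ fixed permutation tests, as argued in the proof of Theorem~\ref{thm:OA_superuniform}.

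For part (ii), I will verify that the hypotheses of Theorem~\ref{thm:oracle_efficiency} follow from the primitives.

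First, Theorem~\ref{thm:OA_concentration} gives Bernstein-type tails for $S_k-\lambda_k$. Restricted to deviations $\eta\lesssim v_k/c_k$, these become the sub-Gaussian bound of Assumption~\ref{ass:OA_concentration} with $\sigma^2\asymp\max_k v_k/T$. This is the only place the rate $\sqrt T$ enters. Larger deviations contribute an exponentially small term, which I absorb into the $o(1)$.

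With this concentration in hand, Lemma~\ref{lem:OA_pullbound} and Proposition~\ref{prop:OA_miss} apply verbatim. The power comparison in the proof of Theorem~\ref{thm:oracle_efficiency} then gives
\[
C_1T^{-1/2}+C_2\,\frac{\log B}{\Delta^2 B}.
\]
Both terms are dominated by $C\log B/(\Delta^2\sqrt T)+o(1)$ when $B\ge1$ and $\Delta$ is fixed.

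The only other thing to check is that the approximate super-uniformity from part (i) does not affect the power argument. That argument uses super-uniformity only through the level-attenuation step, where it enters as an $o(1)$ perturbation and can be absorbed into the final $o(1)$.
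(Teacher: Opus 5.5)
\textbf{Part (i): the decomposition you rerun is invalid.} Your route is the paper's, which proves the corollary by chaining Theorem~\ref{thm:size} with Theorem~\ref{thm:OA_superuniform}, and Theorem~\ref{thm:oracle_efficiency} with Theorem~\ref{thm:OA_concentration}. You are right that part (i) cannot be a literal citation, because Theorem~\ref{thm:size} assumes exact super-uniformity. Your bounded-convergence step turning $O_p(T^{-1/2})$ into $o(1)$ in expectation is also fine. The problem is \eqref{eq:OA_size1} itself. It pulls $\mathbf 1\{\tau=m\}$ outside $\Pr(\cdot\mid\mathcal F_{m-1})$. But the procedure stops at epoch $m$ whenever $P_{k_m}\le\alpha_{k_m}$, so $\{\tau=m\}$ depends on the epoch-$m$ $p$-value and is not $\mathcal F_{m-1}$-measurable. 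Take two independent uniform $p$-values and fixed levels: the right-hand side of \eqref{eq:OA_size1} is $\alpha_1^2+(1-\alpha_1)\alpha_2$, whereas $\Pr(R)=\alpha_1+(1-\alpha_1)\alpha_2$. The valid decomposition uses the predictable event $\{\tau\ge m\}$:
\[
\Pr(R)=\sum_{m=1}^{B}\Eb\Big[\mathbf 1\{\tau\ge m\}\,\Pr\big(P_{k_m}\le\alpha_{k_m}\mid\mathcal F_{m-1}\big)\Big]
\le\Eb\Big[\sum_{m\le\tau}\alpha_{k_m}\Big]+\Eb\Big[\sum_{m\le\tau}\varepsilon^{+}_{T,m}\Big].
\]
The first term is at most $\alpha$ by the cumulative spending bound, since no reward accrues before the first rejection. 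The error term, however, sums over every tested epoch, so your claim that $\mathbf 1\{\tau=m\}$ isolates a single epoch's error fails. Part (i) still holds, but for a different reason: under the main-text protocol each channel is tested at most once, so $\tau\le K$. The error is then a sum of at most $K$ bounded $o_p(1)$ terms. Your fallback for a growing $B$ is wrong: uniformity over epochs does not suffice, because the errors accumulate. You would need $\Eb[\sum_{m\le\tau}\varepsilon^+_{T,m}]\to0$, which at the stated $O_p(T^{-1/2})$ rate calls for something like $B=o(\sqrt T)$.

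\textbf{Part (ii): the final absorption fails for bounded $B$.} For fixed $B\ge2$, the term $C_2\log B/(\Delta^2B)$ is a positive constant. It is neither $o(1)$ nor bounded by $C\log B/(\Delta^2\sqrt T)$ with $C$ free of $T$. So ``$B\ge1$'' is not enough, and you must let $B\to\infty$ with $T$; the paper takes $B\asymp T$. Your own scaling makes this clearer. With $\sigma^2\asymp v/T$, the $\log B$ term in Lemma~\ref{lem:OA_pullbound} acquires a factor $1/T$, and the bound in Proposition~\ref{prop:OA_miss} is dominated by $(K-1)(1+\pi^2/3)/B$. It is therefore the growth of the budget, not the $\sqrt T$ rate, that drives the selection error to zero.

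\textbf{The bandit lemma does not apply ``verbatim''.} Lemma~\ref{lem:OA_pullbound} needs the running estimate after $n_k(r)$ selections to concentrate at radius $\sqrt{2\sigma^2\log B/n_k(r)}$. Theorem~\ref{thm:OA_concentration} only controls one statistic computed once on the sample, and re-evaluating $S_k$ on the same data gives no $1/n_k$ improvement. The paper's chaining takes this for granted as well, but you should state it as an extra requirement rather than a consequence of the primitives.
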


\begin{proof}

Part (i) follows from
Theorem~\ref{thm:size}
together with
Theorem~\ref{thm:OA_superuniform}.

Part (ii) follows from
Theorem~\ref{thm:oracle_efficiency}
together with
Theorem~\ref{thm:OA_concentration}.

\end{proof}

\end{document}